\theoremstyle{plain}
\numberwithin{equation}{section}
\newtheorem{thm}{Theorem}[section]
\newtheorem{lem}[thm]{Lemma}
\newtheorem{cor}[thm]{Corollary}
\theoremstyle{definition}
\newtheorem{example}{Example}
\newcommand{\real}{{\mathbb R}}
\newcommand{\complex}{{\mathbb C}}
\newcommand{\tbullet}{\mathrel{\raise .2ex\hbox{\tiny$\bullet$}}} 
\newcommand{\Ksub}{\mathrm{\scriptscriptstyle{K}}} 
\newcommand{\Jsub}{\mathrm{\scriptscriptstyle{J}}} 
\newcommand{\trace}{\mathrm{tr\,}}
\newcommand{\ityes}{\textit{yes}}      
\newcommand{\itno}{\textit{no}}
\newcommand{\escript}{\mathcal{E}}
\newcommand{\gscript}{\mathcal{G}}
\newcommand{\iscript}{\mathcal{I}}
\newcommand{\jscript}{\mathcal{J}}
\newcommand{\lscript}{\mathcal{L}}
\newcommand{\sscript}{\mathcal{S}}
\newcommand{\alphabar}{\overline{\alpha}}
\newcommand{\betabar}{\overline{\beta}}
\newcommand{\gammabar}{\overline{\gamma}}
\newcommand{\iscriptbar}{\overline{\iscript}}
\newcommand{\Abar}{\overline{A}}
\newcommand{\Bbar}{\overline{B}}
\newcommand{\cbar}{\overline{c}}
\newcommand{\ebar}{\overline{e}}
\newcommand{\onebar}{\overline{1}}
\newcommand{\vbar}{\overline{v}}
\newcommand{\bbar}{\overline{b}}
\newcommand{\ehat}{\widehat{e}}
\newcommand{\etilde}{\tilde{e}}
\newcommand{\utilde}{\tilde{u}}
\newcommand{\vtilde}{\tilde{v}}
\newcommand{\onetilde}{\tilde{1}}
\newcommand{\ab}[1]{\left|#1\right|}
\newcommand{\brac}[1]{\left\{#1\right\}}
\newcommand{\paren}[1]{\left(#1\right)}
\newcommand{\sqbrac}[1]{\left[#1\right]}
\newcommand{\doubleab}[1]{\left|\left|#1\right|\right|}
\newcommand{\sqparen}[1]{{\left[#1\right)}}
\newcommand{\elbows}[1]{{\left\langle#1\right\rangle}}
\begin{document}

\title{GEOMETRIC ALGEBRAS AND FERMION QUANTUM FIELD THEORY}

\author{Stan Gudder}
\address{Department of Mathematics, 
University of Denver, Denver, Colorado 80208}
\email{sgudder@du.edu}
\date{}
\maketitle

\begin{abstract}
Corresponding to a finite dimensional Hilbert space $H$ with $\dim H=n$, we define a geometric algebra $\gscript (H)$ with
$\dim\sqbrac{\gscript (H)}=2^n$. The algebra $\gscript (H)$ is a Hilbert space that contains $H$ as a subspace. We interpret the unit vectors of $H$ as states of individual fermions of the same type and $\gscript (H)$ as a fermion quantum field whose unit vectors represent states of collections of interacting fermions. We discuss creation operators on $\gscript (H)$ and provide their matrix representations. Evolution operators provided by self-adjoint Hamiltonians on $H$ and $\gscript (H)$ are considered. Boson-Fermion quantum fields are constructed. Extensions of operators from $H$ to $\gscript (H)$ are studied. Finally, we present a generalization of our work to infinite dimensional separable Hilbert spaces.
\end{abstract}

\section{Basic Definitions and Preliminary Results}  
Unless stated otherwise, all vector spaces are complex and finite dimensional. Although the next three lemmas are known, we include their proofs for completeness.

\begin{lem}    
\label{lem11}
Let $V$ be a vector space with basis $f_1,f_2,\ldots ,f_n$. For $a,b\in V$ with $a=\sum\alpha _if_i$, $b=\sum\beta _if_i$,
$\alpha _i,\beta _i\in\complex$, $i=1,2,\ldots ,n$, define $\elbows{a,b}=\sum\alphabar _i\beta _i$. Then $(V,\elbows{\tbullet ,\tbullet})$
is a complex inner product space.
\end{lem}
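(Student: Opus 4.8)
The plan is to verify directly that the form $\elbows{\tbullet ,\tbullet}$ satisfies the three defining axioms of a complex inner product: conjugate symmetry, linearity in the second argument, and positive definiteness. The single enabling observation is that because $f_1,f_2,\ldots ,f_n$ is a \emph{basis} of $V$, every vector has a unique coordinate representation; hence $a=\sum\alpha _if_i\mapsto(\alpha _1,\ldots ,\alpha _n)$ is a well-defined bijection $V\to\complex ^n$, and the formula $\elbows{a,b}=\sum\alphabar _i\beta _i$ assigns an unambiguous scalar to each pair $(a,b)$. With well-definedness settled, the remaining checks amount to transporting the standard Euclidean inner product on $\complex ^n$ back to $V$.

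For conjugate symmetry I would compute $\elbows{b,a}=\sum\betabar _i\alpha _i=\overline{\sum\alphabar _i\beta _i}=\overline{\elbows{a,b}}$, using only that complex conjugation is an involution reversing products. For linearity in the second slot, given $c=\sum\gamma _if_i$ and $\lambda\in\complex$, uniqueness of coordinates gives $b+\lambda c=\sum(\beta _i+\lambda\gamma _i)f_i$, so $\elbows{a,b+\lambda c}=\sum\alphabar _i(\beta _i+\lambda\gamma _i)=\elbows{a,b}+\lambda\elbows{a,c}$; conjugate symmetry then converts this into conjugate-linearity in the first argument. Finally, positive definiteness follows from $\elbows{a,a}=\sum\alphabar _i\alpha _i=\sum\ab{\alpha _i}^2\ge 0$, and since a finite sum of nonnegative reals vanishes only when every term does, $\elbows{a,a}=0$ forces $\alpha _i=0$ for all $i$, i.e.\ $a=0$.

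There is essentially no hard step: the lemma holds precisely because the basis identifies $V$ with $\complex ^n$ as a set, and every sum in sight is finite, so no convergence or interchange-of-summation issues intervene. If any point merits emphasis it is the appeal to the basis property at the two places where coordinate uniqueness is invoked---once to make $\elbows{\tbullet ,\tbullet}$ well defined, and once to read off the coordinates of $b+\lambda c$ term by term.
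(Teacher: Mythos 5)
Your proof is correct and follows essentially the same route as the paper: a direct coordinate-by-coordinate verification of conjugate symmetry, linearity in the second argument, and positive definiteness, all reducing to finite sums of complex numbers. The only cosmetic differences are that you fold additivity and homogeneity into a single check and explicitly note well-definedness from uniqueness of coordinates, which the paper leaves implicit.
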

\begin{proof}
If $\alpha\in\complex$, then
\begin{align*}
\elbows{a,\alpha b}&=\sum\alphabar _i(\alpha b_i)=\alpha\sum\alphabar _ib_i=\alpha\elbows{a,b}\\
\elbows{a,b}&=\sum\alphabar _i\beta _i=\overline{\sum\alpha _i\betabar _i}=\overline{\elbows{b,a}}
\end{align*}
If $c=\sum\gamma _if_i$, then $a+b=\sum (\alpha _i+\beta _i)f_i$ and
\begin{equation*}
\elbows{c,a+b}=\sum\gammabar _i(\alpha _i+\beta _i)=\sum\gammabar _i\alpha _i+\sum\gammabar _i\beta _i=\elbows{c,a}+\elbows{c,b}
\end{equation*}
We also have
\begin{equation*}
\elbows{a,a}=\sum\alphabar _i\alpha _i=\sum\ab{\alpha _i}^2\ge 0
\end{equation*}
and $\elbows{a,a}=0$ if and only if $\alpha _i=0$, $i=1,2,\ldots ,n$, which is equivalent to $a=0$.
\end{proof}

It follows that the vector space $V$ of Lemma~\ref{lem11} is a Hilbert space with orthonormal basis $f_1,f_2,\ldots ,f_n$. We denote the set of linear operators on $V$ by $\lscript (V)$. If $T\in\lscript (V)$ then $Tf_j=\sum\limits _kT_{kj}f_k$, $T_{kj}\in\complex$ for all $k,j=1,2,\ldots ,n$. We say that the matrix $\sqbrac{T}=\sqbrac{T_{kj}}$ \textit{represents} the operator $T$. Notice that
\begin{equation*}
\elbows{f_k,Tf_j}=\elbows{f_k,\sum _iT_{ij}f_i}=\sum _iT_{ij}\elbows{f_k,f_i}=T_{kj}
\end{equation*}
so we can find $T_{kj}$ explicity.

\begin{lem}    
\label{lem12}
{\rm{(i)}}\enspace If $\sqbrac{T_{kj}}$ represents $T$, then $\alpha\sqbrac{T_{kj}}$, $\alpha\in\complex$, represents $\alpha T$.
{\rm{(ii)}}\enspace If $\sqbrac{T_{kj}}$ represents $T$ and $\sqbrac{S_{kj}}$ represents $S$, then $\sqbrac{T_{kj}+S_{kj}}$ represents $T+S$ and the usual matrix product $\sqbrac{T_{kj}}\sqbrac{S_{kj}}$ represents $TS$.
\end{lem}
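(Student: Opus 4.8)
The plan is to reduce each assertion to the single fact, established just above, that a matrix $\sqbrac{M_{kj}}$ represents an operator $M$ precisely when $Mf_j=\sum_k M_{kj}f_k$ for every $j$, equivalently when $\elbows{f_k,Mf_j}=M_{kj}$. Thus for each of the three claims I would simply compute the action of the combined operator on an arbitrary basis vector $f_j$, expand the result in the basis $f_1,\ldots,f_n$, and read off the coefficient of $f_k$ as the $(k,j)$ entry of its representing matrix.

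For part (i) and the additive half of part (ii) this is immediate from linearity. Applying $\alpha T$ to $f_j$ gives $(\alpha T)f_j=\alpha\sum_k T_{kj}f_k=\sum_k(\alpha T_{kj})f_k$, so $\alpha T$ is represented by $\alpha\sqbrac{T_{kj}}$. Likewise $(T+S)f_j=Tf_j+Sf_j=\sum_k(T_{kj}+S_{kj})f_k$, so $T+S$ is represented by $\sqbrac{T_{kj}+S_{kj}}$.

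The only step requiring genuine care is the product. Here I would first apply $S$ to $f_j$, then apply $T$ and invoke its linearity to push it through the finite sum, obtaining $(TS)f_j=T\paren{\sum_i S_{ij}f_i}=\sum_i S_{ij}\,Tf_i=\sum_i S_{ij}\sum_k T_{ki}f_k$. Reordering the two finite sums and collecting the coefficient of $f_k$ gives $(TS)f_j=\sum_k\paren{\sum_i T_{ki}S_{ij}}f_k$, so the $(k,j)$ entry of the matrix representing $TS$ is $\sum_i T_{ki}S_{ij}$, which is exactly the $(k,j)$ entry of the ordinary matrix product $\sqbrac{T_{kj}}\sqbrac{S_{kj}}$. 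The main (and modest) obstacle is the index bookkeeping in this last computation: one must expand $Sf_j$ over an intermediate index $i$, expand each $Tf_i$ over the output index $k$, and recognize that it is precisely the summed intermediate index $i$ that produces the matrix-product formula. Once linearity of $T$ is used to move it inside the sum, the remainder is a routine interchange of two finite sums.
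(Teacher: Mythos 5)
Your proposal is correct and follows essentially the same route as the paper: both parts are proved by applying the combined operator to a basis vector $f_j$, using linearity, and reading off the coefficients, with the product case handled by expanding $Sf_j$ over an intermediate index and interchanging the two finite sums (your indices are merely relabeled relative to the paper's). No gaps; nothing further is needed.
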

\begin{proof}
(i)\enspace This follows from 
\begin{equation*}
(\alpha T)f_j=\alpha Tf_j=\sum _k(\alpha T_{kj})f_k
\end{equation*}
for all  $j=1,2,\ldots ,n$.
(ii)\enspace Since
\begin{equation*}
(T+S)f_j=Tf_j+Sf_j=\sum _kT_{kj}f_k+\sum _kS_{kj}f_k=\sum _k(T_{kj}+S_{kj})f_k
\end{equation*}
we have $\sqbrac{T_{kj}+S_{kj}}$ represents $T+S$. Since
\begin{align*}
(TS)f_j&=T(Sf_j)=T\paren{\sum _kS_{kj}f_k}=\sum _kS_{kj}Tf_k=\sum _kS_{kj}\paren{\sum _iT_{ik}f_i}\\
   &=\sum _{i,k}T_{ik}S_{kj}f_i=\sum _i\paren{\sqbrac{T}\sqbrac{S}}_{ij}f_i
\end{align*}
we have that $\sqbrac{T_{kj}}\sqbrac{S_{kj}}$ represents $TS$.
\end{proof}

If $T\in\lscript (V)$ we define the \textit{adjoint} $T^*\in\lscript (V)$ by $\elbows{T^*a,b}=\elbows{a,Tb}$ for every $a,b\in V$.

\begin{lem}    
\label{lem13}
$S=T^*$ if and only if $\elbows{Sf_j,f_k}=\elbows{f_j,Tf_k}$ for all $j,k=1,2,\ldots ,n$.
\end{lem}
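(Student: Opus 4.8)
The plan is to prove the two implications separately, the reverse being the one that carries content. For the forward direction I would argue directly from the definition of the adjoint: if $S = T^*$, then $\elbows{Sa,b} = \elbows{a,Tb}$ holds for \emph{every} $a,b \in V$, and specializing to $a = f_j$, $b = f_k$ gives $\elbows{Sf_j,f_k} = \elbows{f_j,Tf_k}$ for all $j,k = 1,2,\ldots ,n$.

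For the converse, the idea is to promote the hypothesis from basis vectors to arbitrary vectors using the sesquilinearity of $\elbows{\tbullet ,\tbullet}$ recorded in Lemma~\ref{lem11}. Writing $a = \sum_j\alpha_jf_j$ and $b = \sum_k\beta_kf_k$, and using linearity of $S$ and $T$ together with conjugate-linearity of the inner product in the first argument and linearity in the second, I would expand
\begin{align*}
\elbows{Sa,b} &= \sum_{j,k}\overline{\alpha_j}\,\beta_k\elbows{Sf_j,f_k},\\
\elbows{a,Tb} &= \sum_{j,k}\overline{\alpha_j}\,\beta_k\elbows{f_j,Tf_k}.
\end{align*}
The hypothesis makes the two double sums agree term by term, so $\elbows{Sa,b} = \elbows{a,Tb}$ for all $a,b \in V$.

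It then remains to deduce the operator identity $S = T^*$ from this scalar identity. Here I would invoke uniqueness of the adjoint: since $\elbows{Sa,b} = \elbows{a,Tb} = \elbows{T^*a,b}$ for all $b$, the vector $(S-T^*)a$ is orthogonal to every element of $V$ and hence equals $0$; as $a$ is arbitrary, $S = T^*$. The only delicate point is bookkeeping the complex conjugation, matching the conventions $\elbows{a,\alpha b} = \alpha\elbows{a,b}$ and $\elbows{a,b} = \overline{\elbows{b,a}}$ of Lemma~\ref{lem11}, so that both expansions carry the identical scalar factor $\overline{\alpha_j}\beta_k$. I expect no genuine obstacle; the statement is essentially the assertion that a sesquilinear form is determined by its values on a basis.
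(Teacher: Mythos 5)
Your proposal is correct and follows essentially the same route as the paper: the forward direction by specializing the defining identity of $T^*$ to basis vectors, and the converse by expanding $\elbows{Sa,b}$ and $\elbows{a,Tb}$ sesquilinearly over the basis and concluding $\elbows{Sa,b}=\elbows{T^*a,b}$ for all $a,b$. The only difference is cosmetic: you spell out the final step (that $(S-T^*)a$ is orthogonal to all of $V$ and hence zero), which the paper leaves implicit in its closing ``so $S=T^*$.''
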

\begin{proof}
If $S=T^*$, then clearly $\elbows{Sf_j,f_k}=\elbows{f_j,Tf_k}$ for all $j,k=1,2,\ldots ,n$. Conversely, suppose
$\elbows{Sf_j,f_k}=\elbows{f_j,Tf_k}$ for all $J,k=1,2,\ldots ,n$. If $a=\sum\alpha _jf_j,b=\sum\beta _kf_k$, then
\begin{align*}
\elbows{Sa,b}&=\elbows{S\sum\alpha _jf_j,\sum\beta _kf_k}=\sum _{j,k}\alphabar _j\beta _k\elbows{Sf_j,f_k}\\
   &=\sum _{j,k}\alphabar _j\beta _k\elbows{f_j,Tf_k}=\elbows{\sum\alpha _jf_j,T\sum\beta _kf_k}\\
   &=\elbows{a,Tb}=\elbows{T^*a,b}
\end{align*}
so $S=T^*$.
\end{proof}

We say $T\in\lscript (V)$ is \textit{self-adjoint} if $T=T^*$. It follows from Lemma~\ref{lem13} that $T$ is self-adjoint if and only if
$\elbows{Tf_j,f_k}=\elbows{f_j,Tf_k}$ for all $j,k=1,2,\ldots ,n$. We denote the set of self-adjoint operators on $V$ by $\lscript _S(V)$.
If $S,T\in\lscript _S(V)$, we write $S\le T$ if $\elbows{a,Sa}\le\elbows{a,Ta}$ for all $a\in V$ and call $T\in\lscript _S(V)$ \textit{positive} if
$T\ge 0$ where $0$ is the zero operator. We call $T\in\lscript _S(V)$ an \textit{effect} if $0\le T\le I$ where $I$ is the identity operator and denote the set of effects by $\escript (V)$. An operator $T\in\lscript _S(V)$ is a \textit{projection} if $T=T^2$. It is well-known that projections are effects and we call projections \textit{sharp effects}. The \textit{trace} of $T\in\lscript (V)$ is
$\trace (T)=\sum\elbows{f_j,Tf_j}$. We call
$\rho\in\lscript _S(V)$ a \textit{state} if $\rho\ge 0$ and $\trace (\rho )=1$. The set of states is denoted by $\sscript (V)$. Finally, an operator $T\in\lscript (V)$ is \textit{unitary} if $TT^*=I$ or equivalently $T^*=T^{-1}$.

We think of a Hilbert space as a mathematical structure that describes a quantum mechanical system \cite{bgl95,blm96,kra83}. In order to understand why this is so, we need to discuss states and effects on $V$. A state $\rho\in\sscript (V)$ corresponds to the initial condition of a quantum system. An effect $A\in\escript (V)$ corresponds to a \ityes--\itno (true-false) measurement or experiment on the quantum system
\cite{hz12,kra83,nc00}. If $A$ results in the outcome \ityes\ when it is measured, we say that $A$ \textit{occurs} and otherwise, it
\textit{does not occur}. It can be shown that $0\le\trace (\rho A)\le 1$ and we call $\trace (\rho A)$ the \textit{probability} that $A$ occurs in the state $\rho$. An \textit{observable} on $V$ is a finite set of effects $A=\brac{A_x\colon x\in\Omega _A}$ where
$\sum\limits _{x\in\Omega _A}A_x=I$ \cite{hz12,nc00}. We call $\Omega _A$ the \textit{outcome set} of $A$ and when $A$ is measured and the resulting outcome $x$ is observed, we say that the effect $A_x$ \textit{occurs}. If $A$ is measured and the system is in state $\rho$, we call $P_\rho ^A(x)=\trace (\rho A_x)$ the \textit{probability distribution} of $A$. Since
\begin{equation*}
\sum _{x\in\Omega _A}P_\rho ^A(x)=\sum _{x\in\Omega _A}\trace (\rho A_x)=\trace\paren{\rho\sum _{x\in\Omega _A}A_x}=\trace (\rho I)
   =\trace (\rho )=1
\end{equation*}
we see that $P_\rho ^A$ is indeed a probability measure. There is a close connection between observables and self-adjoint operators. If
$A=\brac{A_x\colon x\in\Omega _A}$ is an observable and $\brac{\lambda _x\colon x\in\Omega _A}\subseteq\real$ then
$B=\sum _{x\in\Omega _A}\lambda _xA_x$ is a self-adjoint operator. Conversely, if $B\in\lscript (V)$ then by the spectral theorem
\cite{hz12,nc00}, there exist a finite number of sharp effects $A_i$ and real numbers $\lambda _i$, $i=1,2,\ldots ,m$ such that
$\sum A_i=I$ and $B=\sum\lambda _iA_i$. Hence, $A=\brac{A_i\colon i=1,2,\ldots ,m}$ is an observable. There is also a close connection between self-adjoint operators and the dynamics of a quantum system. This is because $T\in\lscript (V)$ is unitary if and only if there exists an $A\in\lscript _S(V)$ such that $T=e^{iA}$ \cite{hz12,nc00}. If $A$ corresponds to the Hamiltonian of a quantum system then the unitary group $U_t=e^{iAt}$, $t=\sqparen{0,\infty}$, describes the dynamics of the system, where $t$ is the time.

A state $\rho$ is \textit{pure} if it is a one-dimensional projection. In this case, there is a unit vector $\psi\in V$ such that
$\rho (a)=\elbows{\psi ,a\psi}$ for every $a\in\escript (V)$ and we write $\rho =\rho _\psi$. Since any state $\rho$ is an affine combination of pure states $\paren{\rho =\sum\lambda _i\rho _i,\lambda _i\ge 0,\sum\lambda _i=1,\rho _i,\hbox{pure}}$ we shall mainly consider only pure states.

\section{Geometric Algebras and Fermion Quantum Fields}  
We now show that if $H$ is a complex Hilbert space that describes an individual fermion, then the geometric algebra $\gscript (H)$ over $H$ results in a fermion quantum field theory. Our definition of $\gscript (H)$ differs from the usual algebra in the sense that $\gscript (H)$ is complex while the usual algebra is real \cite{art11,dl03,dfm09,dor02,dys77,hs84,hs03,hs15,mac17}. Let $\dim H=n$ and let $e_1,e_2,\ldots , e_n$ be an orthonormal basis for $H$. The \textit{geometric algebra} $\gscript (H)$ over $H$ is the complex homogeneous, associative, distribution algebra containing $H$ that has the basis consisting of the elements $1\in\complex$
\begin{align*}
\brac{e_i\colon i=1,2,\ldots ,n},&\brac{e_ie_j\colon i,j=1,2,\ldots ,i<j}\\
\left\{e_ie_je_k\colon i,j,k\right.&=\left.1,2,\ldots ,n,i<j<k\right\}\\
&\ \vdots\\
\left\{\ehat _1e_2\cdots e_n, e_1\ehat _2e_3\cdots e_n,\right.&\left.\ldots ,e_1e_2\cdots e_{n-1}\ehat _n\right\}\\
e_1e_2\cdots e_n&=\iscript
\end{align*}
where $e_1e_2\cdots\ehat _i\cdots e_n$ means that $e_i$ is not present. There is one additional axiom for $\gscript (H)$, namely, if
$u=\sum\limits _{j=1}^nc_je_j\in H$, then $uu=\sum\limits _{j=1}^nc_j^2\in\complex$.

If $u=\sum\limits _{j=1}^nc_je_j$, we define $\utilde =\sum\limits _{j=1}^n\cbar _je_j$. It is easy to check that
\begin{equation*}
(\alpha u+\beta v)^\sim =\alphabar\utilde +\betabar\vtilde
\end{equation*}
for all $\alpha ,\beta\in\complex$. If $v=\sum d_je_j$, we obtain
\begin{align*}
uv+vu&=(u+v)(u+v)-uu-vv=\sum _{j=1}^n(c_j+d_j)^2-\sum _{j=1}^nc_j^2-\sum _{j=1}^nd_j^2\\
   &=2\sum _{j=1}^nc_jd_j=2\elbows{\utilde ,v}
\end{align*}
Hence, $\utilde\perp v$ if and only if $uv=-vu$. It also follows that if $j\ne k$, then
\begin{equation*}
\elbows{\etilde _j,e_k}=\elbows{e_j,e_k}=0
\end{equation*}
so $e_je_k=-e_ke_j$. Notice that $uu=\elbows{\utilde,u}$ and if $u=e_1+ie_2$ we have the unusual situation that $u\ne 0$ but $uu=0$. Finally, we have that $uu=\sum\limits _{j=1}^nc_j^2$ for all $u\in H$ if and only if $e_je_j=1$ and $e_je_k=-e_ke_j$ for all $j\ne k$.

An element of the form $e_{i_1}e_{i_2}\cdots e_{i_j}$, $i_r\ne i_s$, is said to have \textit{grade} $j$ and grade $(1)=0$. The set of linear combinations of grade $j$ basis elements is a vector subspace of $\gscript (H)$ called the \textit{grade} $j$ \textit{subspace} and is denoted
$\gscript (H)_j$. By definition, $0$ is considered to be every grade level because we want subspaces. Thus,
$\gscript (H)_0\approx\gscript (H)_n\approx\complex$ and $\gscript (H)_1=H$. We see that
\begin{equation*}
\dim\gscript (H)_j=\binom{n}{j}=\frac{n!}{j!(n-j)!}
\end{equation*}
Hence, $\dim\gscript (H)_0=\dim\gscript (H)_n=1$ and by the binomial formula we have
\begin{equation*}
\dim\gscript (H)=\sum _{j=0}^n\dim\gscript (H)_j=\sum _{j=0}^n\binom{n}{j}=(1+1)^n=2^n
\end{equation*}
For $J_k=\brac{j_1,j_2,\ldots ,j_k}$ with $j_1<j_2<\cdots <j_k,j_i\in\brac{1,2,\ldots ,n}$ we define $e_0=1$
\begin{equation*}
e_{\Jsub_k}=e_{j_1}e_{j_2}\cdots e_{j_k}\in\gscript (H)_k
\end{equation*}
and define $\jscript =\brac{0,J_k\colon k=1,2,\ldots ,n}$. We make $\gscript (H)$ into a Hilbert space by declaring
$\brac{e_{\Jsub}\colon J\in\jscript}$ to be an orthonormal basis for $\gscript (H)$. This follows from the next corollary of Lemma~\ref{lem11}.

\begin{cor}    
\label{cor21}
$\paren{\gscript (H),\elbows{\tbullet ,\tbullet}}$ is a Hilbert space with orthonormal basis\newline
$\brac{e_{\Jsub}\colon J\in\jscript}$ and inner product
$\elbows{a,b}=\sum\limits _{J\in\jscript}\alphabar _{\Jsub}\beta _{\Jsub}$\newline
where $a=\sum\limits _{J\in\jscript}\alpha _{\Jsub}e_{\Jsub}$, $b=\sum\limits _{J\in\jscript}\beta _{\Jsub}e_{\Jsub}$.
\end{cor}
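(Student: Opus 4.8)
The plan is to invoke Lemma~\ref{lem11} directly, since this corollary is precisely its specialization to the concrete vector space $\gscript (H)$ equipped with the distinguished basis $\brac{e_{\Jsub}\colon J\in\jscript}$. First I would observe that, by the preceding discussion, $\gscript (H)$ is a complex vector space and the collection $\brac{e_{\Jsub}\colon J\in\jscript}$ is a basis for it: spanning and linear independence are built into the definition of $\gscript (H)$ as the algebra whose basis consists of exactly these elements, and the dimension computation $\sum _{j=0}^n\binom{n}{j}=2^n$ confirms the cardinality. Thus the hypotheses of Lemma~\ref{lem11} are satisfied, with the finite index set $\jscript$ playing the role of $\brac{1,2,\ldots ,n}$.

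Next I would apply Lemma~\ref{lem11} verbatim. Writing $a=\sum _{J\in\jscript}\alpha _{\Jsub}e_{\Jsub}$ and $b=\sum _{J\in\jscript}\beta _{\Jsub}e_{\Jsub}$, the formula $\elbows{a,b}=\sum _{J\in\jscript}\alphabar _{\Jsub}\beta _{\Jsub}$ defines a complex inner product on $\gscript (H)$. This is exactly the content of the lemma, so no fresh verification of conjugate symmetry, linearity, or positive definiteness is needed; the only change from the lemma's statement is the cosmetic reindexing by $\jscript$ in place of $\brac{1,\ldots ,n}$.

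Finally I would record orthonormality and completeness. Evaluating the inner product on basis elements gives $\elbows{e_{\Jsub},e_{\Ksub}}=\delta _{\Jsub\Ksub}$, so $\brac{e_{\Jsub}\colon J\in\jscript}$ is orthonormal. Since $\gscript (H)$ is finite dimensional, any inner product structure on it is automatically complete, whence $\paren{\gscript (H),\elbows{\tbullet ,\tbullet}}$ is a Hilbert space with the stated orthonormal basis.

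I do not anticipate a genuine obstacle, as the statement is an immediate corollary. The one point deserving care is confirming that the listed elements truly form a \emph{basis} of $\gscript (H)$, rather than merely a spanning or a merely independent set; this rests on the structural claim, implicit in the definition of the geometric algebra, that the $2^n$ products $e_{\Jsub}$ are linearly independent, which the count $\dim\gscript (H)=2^n$ already guarantees.
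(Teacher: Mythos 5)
Your proposal is correct and is essentially the paper's own argument: the paper gives no separate proof, presenting the corollary as an immediate specialization of Lemma~\ref{lem11} to the vector space $\gscript (H)$ with distinguished basis $\brac{e_{\Jsub}\colon J\in\jscript}$, exactly as you do. Your added remarks (orthonormality via $\elbows{e_{\Jsub},e_{\Ksub}}=\delta_{\Jsub\Ksub}$ and completeness from finite dimensionality) match the paper's earlier observation following Lemma~\ref{lem11} that such a finite-dimensional inner product space is a Hilbert space with the given basis orthonormal.
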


As before, we denote the set of linear operators on $\gscript (H)$ by $\lscript (\gscript\paren{H)}$ and the discussion of Section~1 on operators applies. In particular, if $T\in\lscript\paren{\gscript (H)}$, then $Te_{\Jsub}=\sum\limits _{\Ksub}T_{\Ksub\Jsub}e_{\Ksub}$, $T_{\Ksub\Jsub}\in\complex$ for all $K,J\in\jscript$ and the matrix $\sqbrac{T}=\sqbrac{T_{KJ}}$ represents $T$. Moreover, Lemmas~\ref{lem12} and \ref{lem13} hold. Since
$\gscript (H)$ is an algebra that is also a Hilbert space, we call $\gscript (H)$ a \textit{Hilbert algebra}.

We think of $\gscript (H)$ as a quantum field theory describing a finite number of fermions of the same type. A basis multi-vector $v=e_{i_1}e_{i_2}\cdots e_{i_k}$ represents a state for $k$ fermions of the same type ($k$ electrons or $k$ protons or $k$ neutrons,\ldots ). The actual state is $\rho _v$ but we shall frequently abuse the notation and call any unit vector $a\in\gscript (H)$ a state when we really mean $\rho _a$. The Pauli exclusion principle postulates that two fermions of the same type cannot exist in the same state. This holds in the $\gscript (H)$ framework because if they are in the same state $e_i\in H$, then the resulting state for the pair would be $e_ie_i=1$ which we call the
\textit{vacuum state}. In this sense, the two particles annihilate each other. It is interesting that three particles in the same state
$e_ie_ie_i$ reduces to a single particle in the state $e_i$.

We call the grade $0$ subspace $\gscript (H)_0=\complex$ the \textit{vacuum subspace}, the grade~1 subspace $\gscript (H)_1=H$ the
\textit{one-fermion subspace},\ldots , the grade $j$ subspace $\gscript (H)_j$ the $j$-\textit{fermion subspace}. The reason for this is that
$\gscript (H)_0$ corresponds to the states in which no fermion is present,\ldots , $\gscript (H)_j$ the states in which $j$ fermions are present. In general, we call $e_i$ a \textit{one-fermion state},\ldots $e_{i_1}e_{i_2}\cdots e_{i_j}$ a $j$-\textit{fermion state}. We also have anti-fermions (anti-electrons, anti-protons,\ldots ). We call $\etilde _i=e_1\cdots\ehat _i\cdots e_n$ an \textit{anti-fermion state}, 
\begin{equation*}
(e_ie_j)^\sim = e_1\cdots\ehat _i\cdots\ehat _j\cdots e_n
\end{equation*}
a 2-anti-fermion state, etc. Notice that $\onetilde =\iscript$ and we call $\gscript (H)_n\approx\complex$ the \textit{anti-vacuum} subspace. A fermion and its corresponding anti-fermion annihilate each other to form the anti-vacuum state $\iscript$.

If $a\in\gscript (H)_j$. $\doubleab{a}=1$, we call $\rho _a$ a $j$-\textit{fermion state} and otherwise $\rho _a$ is a
\textit{combination fermion state}. In general, if $a\in\gscript (H)$ with $\doubleab{a}=1$ and $A\in\escript\paren{\gscript (H)}$, the probability that $A$ occurs in the state $\rho _a$ becomes
\begin{align*}
P_{\rho _a}(A)=\trace (\rho _aA)&=\sum _{i\in\jscript}\elbows{e_i,\rho _a(Ae_i)}=\sum _{i\in\jscript}\elbows{e_i,\elbows{a,Ae_i}a}\\
   &=\sum _{i\in\jscript}\elbows{a,Ae_i}\elbows{e_i,a}=\sum _{i\in\jscript}\elbows{Aa,e_i}\elbows{e_i,a}\\
   &=\elbows{Aa,a}=\elbows{a,Aa}
\end{align*}
If $a=\sum _{j\in\jscript}\alpha _je_j$ and $\alpha =(\alpha _j\colon j\in\jscript )$ is the complex vector, we have
\begin{align*}
P_{\rho _a}(A)&=\elbows{\sum _{j\in\jscript}\alpha _je_j,A\sum _{k\in\jscript}\alpha _ke_k}
    =\sum _{j,k\in\jscript}\alphabar _j\alpha _k\elbows{e_j,Ae_k}\\
    &=\elbows{\alpha ,\sqbrac{A_{jk}}\alpha}
\end{align*}

\section{Creation Operators}  
If $B\in\gscript (H)$ we define $\Bbar\in\lscript\paren{\gscript (H)}$ by $\Bbar a=Ba$. Notice that
$\overline{(\alpha B)}=\alpha\Bbar ,\overline{(A+B)}=\Abar +\Bbar$ and $\overline{(AB)}=\Abar\ \Bbar$ for all $A,B\in\gscript (H)$. A
particular example is the \textit{creation operator} for a fermion in the state $e_i$ given by
\begin{equation*}
C_{e_i}(a)=\ebar _i(a)=e_ia
\end{equation*}

The following lemma will be useful.

\begin{lem}    
\label{lem31}
$e_1e_2\cdots e_je_1e_2\cdots e_j=1$ if $j=1,4,5,8,9,12,13,\ldots$ and\newline
$e_1e_2\cdots e_je_1e_2\cdots e_j=-1$ if $j=2,3,6,7,10,11,14,15,\ldots$
\end{lem}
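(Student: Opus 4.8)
The plan is to reduce the product to a pure sign by repeatedly applying the two defining relations already established for $\gscript(H)$, namely $e_ie_i=1$ and $e_ie_k=-e_ke_i$ for $i\ne k$. Writing $A=e_1e_2\cdots e_j$, the claim I would prove first is the closed form
\[
e_1e_2\cdots e_j\,e_1e_2\cdots e_j=(-1)^{j(j-1)/2},
\]
after which the stated four-fold pattern drops out from an examination of the parity of $j(j-1)/2$.

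For the sign, the cleanest route is to \emph{reverse} the second copy of $A$. Rearranging $e_1e_2\cdots e_j$ into the reversed order $e_je_{j-1}\cdots e_1$ is carried out by successive interchanges of adjacent generators, and the number of adjacent interchanges needed equals the number of inversions of the reversal permutation, which is $\binom{j}{2}=j(j-1)/2$. Since every such interchange is between generators of distinct indices, each contributes a factor $-1$, giving
\[
e_1e_2\cdots e_j\,e_1e_2\cdots e_j=(-1)^{j(j-1)/2}\,e_1e_2\cdots e_j\,e_je_{j-1}\cdots e_1 .
\]
The right-hand product now collapses from the inside out: the central pair is $e_je_j=1$, then $e_{j-1}e_{j-1}=1$, and so on, each collapse contributing the factor $1$ and introducing no new sign, so the whole product equals $(-1)^{j(j-1)/2}$. (Equivalently, one can set up the recursion $P_j=(-1)^{j-1}P_{j-1}$ by dragging the second $e_1$ leftward past the $j-1$ generators $e_j,\ldots,e_2$ and then using $e_1e_1=1$; solving with the base case $P_1=e_1e_1=1$ yields the same closed form by induction.)

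The remaining step is purely arithmetic: determine the parity of $m=j(j-1)/2$ as a function of $j\bmod 4$. Substituting $j=4k,\,4k+1,\,4k+2,\,4k+3$ and simplifying, one finds $m$ even in the first two cases and odd in the last two, so $(-1)^m=1$ for $j\equiv 0,1\pmod 4$ and $(-1)^m=-1$ for $j\equiv 2,3\pmod 4$. These residue classes are exactly the two index lists $1,4,5,8,9,12,13,\ldots$ and $2,3,6,7,10,11,14,15,\ldots$ appearing in the statement, which completes the proof.

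I do not expect a genuine obstacle here; the only place demanding care is the bookkeeping of signs in the rearrangement — specifically, confirming that reversing a length-$j$ string costs exactly $\binom{j}{2}$ adjacent swaps and that every one of these swaps is between distinct indices, so that the anticommutation relation $e_ie_k=-e_ke_i$ applies rather than $e_ie_i=1$. Performing the subsequent collapse strictly symmetrically (innermost pair first) guarantees that the matching generators are already adjacent when cancelled, so no stray signs are ever introduced.
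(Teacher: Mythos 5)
Your proof is correct, but it follows a genuinely different route from the paper's. The paper argues by recursive reduction: it cancels the two copies of $e_1$ by anticommuting one of them past the intervening $j-1$ generators (picking up a factor $(-1)^{j-1}$), thereby reducing the $j$-th case to the $(j-1)$-st, and it carries this out explicitly for $j=1,\ldots,7$ before asserting that ``the pattern continues.'' That is precisely your parenthetical recursion $P_j=(-1)^{j-1}P_{j-1}$, $P_1=1$, except the paper presents it through worked examples rather than as a completed induction. Your main argument is instead a one-shot closed form: reverse the second copy at the cost of $\binom{j}{2}$ adjacent anticommutations (each between distinct generators, hence each contributing $-1$), collapse the resulting palindrome from the inside out using $e_ie_i=1$ with no further signs, and conclude $e_1\cdots e_j e_1\cdots e_j=(-1)^{j(j-1)/2}$, after which the mod-$4$ parity check reproduces the two index lists in the statement. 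What your approach buys is an explicit formula for the sign and a fully rigorous argument with no informal ``pattern continues'' step; what the paper's approach buys is that it stays within elementary one-step manipulations and requires no inversion-counting combinatorics. Both are sound, and your sign bookkeeping (all swaps are between distinct indices; cancellations happen only between already-adjacent equal generators) is exactly the care the argument requires.
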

\begin{proof}
Clearly $e_1e_1=1$ and we have $e_1e_2e_1e_2=-e_2e_1e_1e_2=-e_2e_2=-1$.
Continuing, we obtain
\begin{equation*}
e_1e_2e_3e_1e_2e_3=e_2e_3e_2e_3=-1
\end{equation*}
by the previous case. For $j=6$ we have
\begin{align*}
e_1e_2e_3e_4&e_5e_6e_1e_2e_3e_4e_5e_6=-e_2e_3e_4e_5e_2e_3e_4e_5\\
   &=e_3e_4e_5e_3e_4e_5=-1
\end{align*}
by the previous case. For $j=7$ we have
\begin{equation*}
e_1e_2\cdots e_7e_1e_2\cdots e_7=e_2e_3\cdots e_7e_2e_3\cdots e_7=-1
\end{equation*}
by the previous case. This pattern continues. For $j=4$, we have
\begin{equation*}
e_1e_2e_3e_4e_1e_2e_3e_4=-e_2e_3e_4e_2e_3e_4=1
\end{equation*}
by the $j=3$ case. For $j=5$, we have
\begin{equation*}
e_1e_2e_3e_4e_5e_1e_2e_3e_4e_5=e_2e_3e_4e_5e_2e_3e_4e_5=1
\end{equation*}
by the previous case. Again the pattern continues.
\end{proof}

\begin{thm}    
\label{thm32}
{\rm{(i)}}\enspace The creation operator $C_{e_i}$ is self-adjoint and unitary.\newline
{\rm{(ii)}}\enspace For $J=\brac{j_1,j_2,\ldots ,j_r}\in\jscript$, the operator $\ebar _{\Jsub}$ is unitary and it is self-adjoint if and only if
$r\in\brac{1,4,5,8,9,12,13,\ldots}$,
\end{thm}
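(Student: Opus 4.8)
The plan is to handle both parts uniformly by first showing that every $\ebar _{\Jsub}$ is a signed permutation of the orthonormal basis $\brac{e_{\Ksub}\colon K\in\jscript}$, and then reducing self-adjointness to a single sign computation. I would begin by recording how $\ebar _{\Jsub}$ acts on a basis multi-vector: for $K\in\jscript$ the product $e_{\Jsub}e_{\Ksub}$, after reordering the generators via $e_je_k=-e_ke_j$ and cancelling the repeated ones via $e_je_j=1$, equals $\pm e_{K\triangle J}$, where $K\triangle J$ denotes the symmetric difference (each index in $J\cap K$ squares away, each index in $J\triangle K$ survives). Since $K\mapsto K\triangle J$ is an involutive bijection of $\jscript$, the matrix $\sqbrac{\ebar _{\Jsub}}$ has exactly one nonzero entry, of modulus $1$, in each row and column; it is thus a real orthogonal matrix, so $\ebar _{\Jsub}$ is unitary. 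This settles the unitarity claims in (i) (the case $J=\brac{i}$) and in (ii) simultaneously.

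Next I would reduce self-adjointness to the value of the scalar $e_{\Jsub}e_{\Jsub}$. For any unitary $U$ one has $U^*=U^{-1}$, so $U$ is self-adjoint exactly when $U=U^{-1}$, i.e. when $U^2=I$. Applying this to $\ebar _{\Jsub}$, I compute $\ebar _{\Jsub}^{\,2}(a)=e_{\Jsub}(e_{\Jsub}a)=(e_{\Jsub}e_{\Jsub})a$, and since $e_{\Jsub}e_{\Jsub}\in\complex$ this yields $\ebar _{\Jsub}^{\,2}=(e_{\Jsub}e_{\Jsub})I$. Hence $\ebar _{\Jsub}$ is self-adjoint when $e_{\Jsub}e_{\Jsub}=1$; and when $e_{\Jsub}e_{\Jsub}=-1$ we get $\ebar _{\Jsub}^*=-\ebar _{\Jsub}$, which cannot equal $\ebar _{\Jsub}$ since $\ebar _{\Jsub}(1)=e_{\Jsub}\neq 0$, so $\ebar _{\Jsub}$ fails to be self-adjoint. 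For part (i) this is immediate, as $e_ie_i=1$ forces $\ebar _i$ to be self-adjoint.

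It then remains to evaluate $e_{\Jsub}e_{\Jsub}$ for $J=\brac{j_1,\ldots,j_r}$ and to observe that it depends only on $r$. Moving the interior factor $e_{j_1}$ leftward past $e_{j_2}\cdots e_{j_r}$ costs $(-1)^{r-1}$ and, after the collapse $e_{j_1}e_{j_1}=1$, reduces $e_{\Jsub}e_{\Jsub}$ to $(-1)^{r-1}$ times the corresponding product of length $r-1$. This is precisely the recursion underlying Lemma~\ref{lem31}, whose proof used only that the indices are distinct; consequently $e_{\Jsub}e_{\Jsub}$ equals the sign recorded there, namely $+1$ exactly for $r\in\brac{1,4,5,8,9,\ldots}$ (equivalently $r(r-1)/2$ even, i.e. $r\equiv 0,1\pmod 4$). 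Combined with the preceding paragraph, this gives that $\ebar _{\Jsub}$ is self-adjoint if and only if $r\in\brac{1,4,5,8,9,\ldots}$, as asserted.

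The only point requiring genuine care is the passage from Lemma~\ref{lem31}, stated only for the consecutive block $\brac{1,2,\ldots,j}$, to an arbitrary $J$ of cardinality $r$; I expect this index-independence of the sign of $e_{\Jsub}e_{\Jsub}$, fed into the dichotomy $\ebar _{\Jsub}^{\,2}=\pm I$ via the unitary--involution criterion, to be the crux. The remaining ingredients—unitarity from the signed-permutation structure and the characterisation of self-adjoint unitaries as those squaring to $I$—are routine.
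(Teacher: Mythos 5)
Your proof is correct, and it reorganizes the argument in a way that genuinely differs from the paper's. The paper proves (i) first by a direct matrix-element computation, showing $\elbows{e_{\Ksub},C_{e_i}e_{\Jsub}}=\elbows{C_{e_i}e_{\Ksub},e_{\Jsub}}$ via the sign-matching of $e_{\Ksub}=\pm e_ie_{\Jsub}$ versus $e_{\Jsub}=\pm e_ie_{\Ksub}$, and only then deduces unitarity of $C_{e_i}$ from $C_{e_i}^2=I$; in (ii) it gets unitarity of $\ebar _{\Jsub}$ as a product of unitaries and reduces self-adjointness, via the reversal identity $(C_{j_1}\cdots C_{j_r})^*=C_{j_r}\cdots C_{j_1}$ and telescoping, to $(\ebar _{\Jsub})^2=I$, which Lemma~\ref{lem31} decides. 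You instead prove unitarity of every $\ebar _{\Jsub}$ at once from the signed-permutation structure (each $e_{\Jsub}e_{\Ksub}$ is $\pm$ a basis vector, indexed by the symmetric difference of $K$ and $J$), and then obtain all the self-adjointness claims, including (i), from the single criterion that a unitary is self-adjoint iff it squares to $I$, together with $\ebar _{\Jsub}^{\,2}=(e_{\Jsub}e_{\Jsub})I$. Both arguments bottom out in the same place, the sign $(-1)^{r(r-1)/2}$ of Lemma~\ref{lem31}, but your route is more uniform (part (i) becomes a corollary rather than a prerequisite), and it makes explicit a point the paper leaves implicit: Lemma~\ref{lem31} is stated only for the consecutive block $e_1e_2\cdots e_j$, and applying it to an arbitrary $J$ of cardinality $r$ requires observing that the computation uses only distinctness of the indices, hence depends only on $r$. (The paper's own proof of the lemma already relabels tacitly, e.g.\ reducing $e_1e_2e_3e_1e_2e_3$ to $e_2e_3e_2e_3$ and citing ``the previous case,'' so your remark also patches the lemma's internal induction.) What the paper's approach buys is an explicit description of the matrix elements of $C_{e_i}$, which it reuses in later sections; what yours buys is economy and a clean separation of the purely algebraic identity $\ebar _{\Jsub}^{\,2}=\pm I$ from the Hilbert-space reasoning.
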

\begin{proof}  
(i)\enspace For $J,K\in\jscript$ we have $\elbows{e_{\Ksub},C_{e_i}e_{\Jsub}}=0$ unless $e_{\Ksub}=\pm e_ie_{\Jsub}$ and if
$e_{\Ksub}=\pm e_ie_{\Jsub}$, then
$\elbows{e_{\Ksub},e_ie_{\Jsub}}=\pm 1$. Similarly $\elbows{C_{e_i}e_{\Ksub},e_{\Jsub}}=0$ unless, $e_{\Jsub}=\pm e_ie_{\Ksub}$ and if $e_{\Jsub}=\pm e_ie_{\Ksub}$, then
$\elbows{C_{e_i}e_{\Ksub},e_{\Jsub}}=\pm 1$. Also, $e_{\Ksub}=e_ie_{\Jsub}$ if and only if $e_{\Jsub}=e_ie_{\Ksub}$ and
$e_{\Ksub}=-e_ie_{\Jsub}$ if and only if $e_{\Jsub}=-e_ie_{\Ksub}$. We conclude that
\begin{equation*}
\elbows{e_{\Ksub},C_{e_i}e_{\Jsub}}=\elbows{C_{e_i}e_{\Ksub},e_{\Jsub}}
\end{equation*}
for every $e_{\Jsub},e_{\Ksub}$ so $C_{e_i}=C_{e_i}^*$ and $C_{e_i}$ is self-adjoint. To show that $C_{e_i}$ is unitary, we have
\begin{equation*}
C_{e_i}C_{e_i}e_{\Jsub}=e_ie_ie_{\Jsub}=e_{\Jsub}
\end{equation*}
for every $J\in\jscript$. Hence, $C_{e_i}C_{e_i}^*=C_{e_i}C_{e_i}=I$ so $C_{e_i}$ is unitary.\break
(ii)\enspace The operator $\ebar _{\Jsub}$ is unitary because $\ebar _{\Jsub}=C_{j_1}C_{j_2}\cdots C_{j_r}$ and the product of unitary operators is unitary. We have that $\ebar _{\Jsub}$ is self-adjoint if and only if
\begin{equation*}
C_{j_1}C_{j_2}\cdots C_{j_r}=(C_{j_1}C_{j_2}\cdots C_{j_r})^*=C_{j_r}^*C_{j_{r-1}}^*\cdots C_{j_1}^*=C_{j_r}C_{j_{r-1}}\cdots C_{j_1}
\end{equation*}
This equality holds if and only if
\begin{equation*}
(C_{j_1}C_{j_2}\cdots C_{j_r})^2=C_{j_1}C_{j_2}\cdots C_{j_r}C_{j_r}C_{j_{r-1}}\cdots C_{j_1}=1
\end{equation*}
The result follows from Lemma~\ref{lem31}.
\end{proof}

\begin{example}  
Letting $H=\complex ^2$, the algebra $\gscript (H)$ is 4-dimensional with basis
\begin{equation*}
\begin{cases}
\ 1\ & \text{grade } 0 \\
e_1\ e_2& \text{grade } 1\\
\iscript =e_1e_2& \text{grade } 2
\end{cases}
\end{equation*}
The creation operators $C_{e_1},C_{e_2}$ are given by $C_{e_1}(1)=e_1$, $C_{e_1}(e_1)=1$, $C_{e_1}(e_2)=\iscript$,
$C_{e_1}(\iscript )=e_2$ and $C_{e_2}(1)=e_2$, $C_{e_2}(e_1)=-e_1e_2=-\iscript$, $C_{e_2}(e_2)=1$, $C_{e_2}(\iscript )=-e_1$.
The corresponding matrices are
\begin{equation*}
M\sqbrac{C_{e_1}}=\begin{bmatrix}0&1&0&0\\1&0&0&0\\0&0&0&1\\0&0&1&0\end{bmatrix}\quad
M\sqbrac{C_{e_2}}=\begin{bmatrix}0&0&1&0\\0&0&0&-1\\1&0&0&0\\0&-1&0&0\end{bmatrix}
\end{equation*}
It is clear that these matrices are unitary and self-adjoint. The operator $\iscriptbar$ is given by $\iscriptbar (1)=\iscript$, 
$\iscriptbar (e_1)=-e_2$, $\iscriptbar (e_2)=e_1$, $\iscriptbar (\iscript )=-1$. The corresponding matrix is
\begin{equation*}
M\sqbrac{\iscriptbar\,}=\begin{bmatrix}0&0&0&-1\\0&0&1&0\\0&-1&0&0\\1&0&0&0\end{bmatrix}
\end{equation*}
We conclude that $\iscriptbar$ is unitary but not self-adjoint as shown in Theorem~\ref{thm32}(ii).\qedsymbol
\end{example}

\begin{example}  
Letting $H=\complex ^3$, the algebra $\gscript (H)$ is 8-dimensional with basis
\begin{equation*}
\begin{cases}
\ 1\ & \text{grade } 0 \\
e_1\ e_2\ e_3& \text{grade } 1\\
e_1e_2\ e_1e_3\ e_2e_3& \text{grade } 2\\
e_1e_2e_3=\iscript& \text{grade } 3
\end{cases}
\end{equation*}
The creation operator $C_{e_1}$ is given by $C_{e_1}(1)=e_1$, $C_{e_1}(e_1)=1$, $C_{e_1}(e_2)=e_1e_2$, $C_{e_1}(e_3)=e_1e_3$,
$C_{e_1}(e_1e_2)=e_2$, $C_{e_1}(e_1e_3)=e_3$, $C_{e_1}(e_2e_3)=\iscript$, $C_{e_1}(\iscript )=e_2e_3$. The corresponding matrix is 
\begin{equation*}
M\sqbrac{C_{e_1}}=\begin{bmatrix}0&1&0&0&0&0&0&0\\1&0&0&0&0&0&0&0\\0&0&0&0&1&0&0&0\\
   0&0&0&0&0&1&0&0\\0&0&1&0&0&0&0&0\\0&0&0&1&0&0&0&0\\0&0&0&0&0&0&0&1\\0&0&0&0&0&0&1&0\\
\end{bmatrix}
\end{equation*}
which is unitary and self-adjoint. Also, $M\sqbrac{C_{e_2}}$, $M\sqbrac{C_{e_3}}$ are similar and are unitary, self-adjoint. The operator
$\overline{e_1e_2}$ satisfies: $\overline{e_1e_2}(1)=e_1e_2$, $\overline{e_1e_2}(e_1)=-e_2$, $\overline{e_1e_2}(e_2)=e_1$,
$\overline{e_1e_2}(e_3)=\iscript$, $\overline{e_1e_2}(e_1e_2)=-1$, $\overline{e_1e_2}(e_1e_3)=-e_2e_3$, $\overline{e_1e_2}(e_2e_3)=e_1e_3$,
$\overline{e_1e_2}(\iscript )=-e_3$. The corresponding matrix is
\begin{equation*}
M\sqbrac{\overline{e_1e_2}\,}=\begin{bmatrix}0&0&0&0&-1&0&0&0\\0&0&1&0&0&0&0&0\\0&-1&0&0&0&0&0&0\\
   0&0&0&0&0&0&0&-1\\1&0&0&0&0&0&0&0\\0&0&0&0&0&0&1&0\\0&0&0&0&0&-1&0&0\\0&0&0&1&0&0&0&0\\
\end{bmatrix}
\end{equation*}
We conclude that $\overline{e_1e_2}$ is unitary but not self-adjoint as shown in Theorem~\ref{thm32}(ii).\qedsymbol
\end{example}

We now consider the eigenvalues and eigenvectors of $C_{e_i}$.

\begin{thm}    
\label{thm33}
The eigenvalues of $C_{e_i}$ are $\pm 1$. The normalized eigenvectors for 1 are $\tfrac{1}{\sqrt{2}}(e_{\Jsub}+e_ie_{\Jsub})$ where
$i\notin J$ and the normalized eigenvectors for $-1$ are $\tfrac{1}{\sqrt{2}}(e_{\Jsub}+e_ie_{\Jsub})$ where $i\notin J$ and the normalized eigenvectors for $-1$ are $\tfrac{1}{\sqrt{2}}(e_{\Jsub}-e_ie_{\Jsub})$ where $i\notin J$. There are $2^{n-1}$ normalized eigenvectors for eigenvalue 1 and $2^{n-1}$ normalized eigenvectors for eigenvalue $-1$.
\end{thm}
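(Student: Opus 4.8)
The plan is to build everything on the single relation $C_{e_i}^2 = I$, which is already contained in the proof of Theorem~\ref{thm32}(i): there it is shown that $C_{e_i}C_{e_i}e_{\Jsub} = e_ie_ie_{\Jsub} = e_{\Jsub}$ for every $J\in\jscript$, so $C_{e_i}^2 = I$ as an operator. Consequently the minimal polynomial of $C_{e_i}$ divides $x^2-1 = (x-1)(x+1)$, and every eigenvalue $\lambda$ must satisfy $\lambda^2 = 1$. This immediately restricts the spectrum to $\{1,-1\}$, giving the first assertion before any eigenvector is exhibited.

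Next I would verify the proposed eigenvectors by direct computation. For $J\in\jscript$ with $i\notin J$, using $C_{e_i}(e_{\Jsub}) = e_ie_{\Jsub}$ together with $C_{e_i}(e_ie_{\Jsub}) = e_ie_ie_{\Jsub} = e_{\Jsub}$, one line suffices:
\[
C_{e_i}\paren{\tfrac{1}{\sqrt 2}(e_{\Jsub} \pm e_ie_{\Jsub})} = \tfrac{1}{\sqrt 2}(e_ie_{\Jsub} \pm e_{\Jsub}) = \pm\,\tfrac{1}{\sqrt 2}(e_{\Jsub} \pm e_ie_{\Jsub}),
\]
so the $+$ combination lies in the $1$-eigenspace and the $-$ combination in the $(-1)$-eigenspace. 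For normalization I would note that since $i\notin J$, the product $e_ie_{\Jsub}$ equals $\pm e_{\Ksub}$ for the index $K = J\cup\{i\}\ne J$; hence $e_{\Jsub}$ and $e_ie_{\Jsub}$ are distinct orthonormal basis vectors by Corollary~\ref{cor21}, and $\doubleab{e_{\Jsub}\pm e_ie_{\Jsub}}^2 = 2$, so the stated vectors indeed have norm $1$.

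For the counting and completeness, I would observe that the index sets $J\in\jscript$ with $i\notin J$ are exactly the subsets of $\{1,\ldots,n\}\setminus\{i\}$, of which there are $2^{n-1}$, yielding $2^{n-1}$ normalized eigenvectors for each of $+1$ and $-1$. To see these exhaust the space, I would check that the full family of $2^n$ eigenvectors is orthonormal: within a fixed $J$ the orthogonality of $e_{\Jsub}$ and $e_ie_{\Jsub}$ forces $\elbows{e_{\Jsub}+e_ie_{\Jsub},\,e_{\Jsub}-e_ie_{\Jsub}} = 0$, while for distinct $J\ne J'$ (both omitting $i$) the four basis elements $e_{\Jsub},\,e_ie_{\Jsub},\,e_{J'},\,e_ie_{J'}$ are pairwise distinct and hence orthogonal. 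Since $\dim\gscript(H) = 2^n$, these $2^n$ orthonormal vectors form a basis, confirming that both $\pm 1$ genuinely occur and that no further eigenvectors exist. The main obstacle I anticipate is precisely this last bookkeeping step—verifying that for $J\ne J'$ all four associated indices are distinct so that the eigenvectors are mutually orthogonal, and confirming the total count is $2^n$; everything else follows mechanically from $C_{e_i}^2 = I$ and the single sign observation $e_ie_{\Jsub} = \pm e_{\Ksub}$ with $K\ne J$.
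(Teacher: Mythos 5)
Your proposal is correct, and its core---the direct verification that $C_{e_i}\paren{e_{\Jsub}\pm e_ie_{\Jsub}}=\pm\paren{e_{\Jsub}\pm e_ie_{\Jsub}}$ for $i\notin J$, followed by counting the $2^{n-1}$ index sets omitting $i$---is exactly the paper's computation. You diverge in two places, both defensibly. First, for the eigenvalue restriction the paper invokes Theorem~\ref{thm32}(i): $C_{e_i}$ is self-adjoint and unitary, so its eigenvalues are real of modulus $1$, hence $\pm 1$; you instead use only the involution $C_{e_i}^2=I$, giving $\lambda ^2=1$. Your route is more elementary, needing no spectral facts about self-adjoint or unitary operators but only $e_ie_i=1$, though the paper's route costs nothing since Theorem~\ref{thm32} is already in hand. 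Second, and more substantively, the paper closes with the bare assertion that since $\dim\sqbrac{\gscript (H)}=2^n$ ``we have found all the eigenvectors,'' which as stated requires the linear independence of the $2^n$ exhibited vectors; the paper only supplies that later, in Theorem~\ref{thm34}(ii), where the vectors are shown to be orthonormal. You carry out precisely that orthonormality bookkeeping (pairwise distinctness of the indices $J$, $J\cup\brac{i}$, $J'$, $J'\cup\brac{i}$ for $J\ne J'$) inside the proof itself, so your argument is self-contained where the paper's is forward-referencing. Likewise your explicit normalization check, $e_ie_{\Jsub}=\pm e_{\Ksub}$ with $K=J\cup\brac{i}\ne J$ so that $\doubleab{e_{\Jsub}\pm e_ie_{\Jsub}}^2=2$, makes precise what the paper leaves implicit.
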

\begin{proof}
Since $C_{e_i}$ is self-adjoint and unitary, the eigenvalues of $C_{e_i}$ are real and have absolute value 1. Hence, the eigenvalues
$\lambda$ satisfy $\lambda =\pm 1$. If $i\notin J$ we have
\begin{equation*}
C_{e_i}(e_{\Jsub}+e_ie_{\Jsub})=e_ie_{\Jsub}+e_ie_ie_{\Jsub}=e_ie_{\Jsub}+e_{\Jsub}
\end{equation*}
Hence, $\tfrac{1}{\sqrt{2}}(e_{\Jsub}+e_ie_{\Jsub})$ is a normalized eigenvector for eigenvalue 1 for all $J$ with $i\notin J$. Notice, there are $2^{n-1}$ such eigenvectors. If $i\notin J$ we have
\begin{equation*}
C_{e_i}(e_{\Jsub}-e_ie_{\Jsub})=e_ie_{\Jsub}-e_ie_ie_{\Jsub}=e_ie_{\Jsub}-e_{\Jsub}=-(e_{\Jsub}-e_ie_{\Jsub})
\end{equation*}
Hence, $\tfrac{1}{\sqrt{2}}(e_{\Jsub}-e_ie_{\Jsub})$ is a normalized eigenvector for eigenvalue $-1$ for all $J$ with $i\notin J$. Again, there are $2^{n-1}$ such eigenvectors. Since $\dim\sqbrac{\gscript (H)}=2^n$ we have found all the eigenvectors. of $C_{e_i}$
\end{proof}

Notice that when $i\notin J$ we have
\begin{equation*}
\elbows{e_{\Jsub}+e_ie_{\Jsub},e_{\Jsub}-e_ie_{\Jsub}}
=\elbows{e_{\Jsub},e_{\Jsub}}-\elbows{e_{\Jsub},e_ie_{\Jsub}}+\elbows{e_ie_{\Jsub},e_{\Jsub}}-\elbows{e_ie_{\Jsub},e_ie_{\Jsub}}=0
\end{equation*}
as it should be because eigenvectors for different eigenvalues are orthogonal.

\begin{example}  
According to Theorem~\ref{thm33}, if $H=\complex ^2$ the eigenvectors of $C_{e_i}$ in $\gscript (H)$ are as follows. The $J\in\jscript$ for which $1\notin J$ are $J=\brac{0}$ and $J=\brac{2}$. The resulting eigenvectors for eigenvalue $1$ are
\begin{equation*}
\tfrac{1}{\sqrt{2}}(1+e_11)=\tfrac{1}{\sqrt{2}}(1+e_1),\ \tfrac{1}{\sqrt{2}}(e_2+e_1e_2)
\end{equation*}
and the eigenvectors for eigenvalue $-1$ are
\begin{equation*}
\tfrac{1}{\sqrt{2}}(1-e_11)=\tfrac{1}{\sqrt{2}}(1-e_1),\ \tfrac{1}{\sqrt{2}}(e_2-e_1e_2)
\end{equation*}
The corresponding matrix representations for these vectors are
\begin{equation*}
\frac{1}{\sqrt{2}}\begin{bmatrix}1\\1\\0\\0\end{bmatrix},\ \frac{1}{\sqrt{2}}\begin{bmatrix}0\\0\\1\\1\end{bmatrix},\ 
\frac{1}{\sqrt{2}}\begin{bmatrix}1\\-1\\0\\0\end{bmatrix},\ \frac{1}{\sqrt{2}}\begin{bmatrix}0\\0\\1\\-1\end{bmatrix}
\end{equation*}
Applying $M\sqbrac{C_{e_1}}$ to these vector representations verify they are eigenvectors of $C_{e_1}$ for eigenvalues $\pm 1$. We next consider $C_{e_2}$. The $J\in\jscript$ for which $2\notin J$ are $J=\brac{0}$ and $J=\brac{1}$. The resulting eigenvectors for eigenvalue 1 are
\begin{equation*}
\tfrac{1}{\sqrt{2}}(1+e_21)=\tfrac{1}{\sqrt{2}}(1+e_2),\ \tfrac{1}{\sqrt{2}}(e_1+e_2e_1)=\tfrac{1}{\sqrt{2}}(e_1-e_1e_2)
   =\tfrac{1}{\sqrt{2}}(e_1-\iscript )
\end{equation*}
and the eigenvectors for eigenvalue -1 are
\begin{equation*}
\frac{1}{\sqrt{2}}\begin{bmatrix}1\\0\\1\\0\end{bmatrix},\ \frac{1}{\sqrt{2}}\begin{bmatrix}0\\1\\0\\-1\end{bmatrix},\ 
\frac{1}{\sqrt{2}}\begin{bmatrix}1\\0\\-1\\0\end{bmatrix},\ \frac{1}{\sqrt{2}}\begin{bmatrix}0\\1\\0\\1\end{bmatrix}
\end{equation*}
Applying $M\sqbrac{C_{e_2}}$ to these vector representations verify they are eigenvectors of $C_{e_2}$ for eigenvalues $\pm 1$.
\qedsymbol
\end{example}

\begin{example}  
We now consider the matrix representations for the eigenvectors of $C_{e_1}$ in $\gscript (H)$ where $H=\complex ^3$. The
$J\in\jscript$ for which $1\notin J$ are $J=\brac{0},\brac{2},\brac{3},\brac{2,3}$ the resulting eigenvectors for eigenvalue 1 are
\begin{equation*}
\tfrac{1}{\sqrt{2}}(1+e_1),\tfrac{1}{\sqrt{2}}(e_2+e_1e_2),\tfrac{1}{\sqrt{2}}(e_3+e_1e_3),\tfrac{1}{\sqrt{2}}(e_2e_3+\iscript )
\end{equation*}
and the eigenvectors for eigenvalue $-1$ are
\begin{equation*}
\tfrac{1}{\sqrt{2}}(1-e_1),\tfrac{1}{\sqrt{2}}(e_2-e_1e_2),\tfrac{1}{\sqrt{2}}(e_3-e_1e_3),\tfrac{1}{\sqrt{2}}(e_2e_3-\iscript )
\end{equation*}
The corresponding matrix representations for these vectors are
\begin{equation*}
\frac{1}{\sqrt{2}}\begin{bmatrix}1\\1\\0\\0\\0\\0\\0\\0\end{bmatrix},\ 
\frac{1}{\sqrt{2}}\begin{bmatrix}0\\0\\1\\0\\1\\0\\0\\0\end{bmatrix},\ 
\frac{1}{\sqrt{2}}\begin{bmatrix}0\\0\\0\\1\\0\\1\\0\\0\end{bmatrix},\ 
\frac{1}{\sqrt{2}}\begin{bmatrix}0\\0\\0\\0\\0\\0\\1\\1\end{bmatrix}
\end{equation*}
\begin{equation*}
\frac{1}{\sqrt{2}}\begin{bmatrix}1\\-1\\0\\0\\0\\0\\0\\0\end{bmatrix},\ 
\frac{1}{\sqrt{2}}\begin{bmatrix}0\\0\\1\\0\\-1\\0\\0\\0\end{bmatrix},\ 
\frac{1}{\sqrt{2}}\begin{bmatrix}0\\0\\0\\1\\0\\-1\\0\\0\end{bmatrix},\ 
\frac{1}{\sqrt{2}}\begin{bmatrix}0\\0\\0\\0\\0\\0\\1\\-1\end{bmatrix}
\end{equation*}
As in Example~3, these vectors form an orthonormal basis for $\gscript (H)$. Applying $M\sqbrac{C_{e_1}}$ to these vector representations verify they are eigenvectors of $C_{e_1}$ for eigenvalues $\pm 1$. Similar results hold for $C_{e_2}$ and $C_{e_3}$.\hfill\qedsymbol
\end{example}

The \textit{anti-commutant} of two operator $S,T$ is 
\begin{equation*}
\brac{S,T}=ST+TS
\end{equation*}

\begin{thm}    
\label{thm34}
{\rm{(i)}}\enspace If $e_1\ne e_2$, then $\brac{C_{e_1},C_{e_2}}=0$.
{\rm{(ii)}}\enspace The eigenvectors $\tfrac{1}{\sqrt{2}}(e_{\Jsub}+e_ie_{\Jsub})$, $\tfrac{1}{\sqrt{2}}(e_{\Jsub}-e_ie_{\Jsub})$, $i\notin J$ form an orthonormal basis for $\gscript (H)$.
\end{thm}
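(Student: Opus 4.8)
The plan is to dispatch (i) by a one-line algebraic computation and to reduce (ii) to the elementary observation that the proposed eigenvectors are nothing but a change of basis carried out independently inside mutually orthogonal two-dimensional blocks.

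For (i), recall that $C_{e_j}(a)=e_ja$ and that $\overline{(AB)}=\Abar\,\Bbar$, so $C_{e_1}C_{e_2}=\overline{e_1}\,\overline{e_2}=\overline{e_1e_2}$ is left multiplication by $e_1e_2$, and likewise $C_{e_2}C_{e_1}$ is left multiplication by $e_2e_1$. Hence $\brac{C_{e_1},C_{e_2}}$ is left multiplication by $e_1e_2+e_2e_1$. Because $e_1\ne e_2$ are distinct basis vectors, the anticommutation relation $e_1e_2=-e_2e_1$ established in Section~2 gives $e_1e_2+e_2e_1=0$, so $\brac{C_{e_1},C_{e_2}}a=0$ for every $a\in\gscript(H)$, i.e. $\brac{C_{e_1},C_{e_2}}=0$.

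For (ii), fix $i$. First I would count: the index sets $J\in\jscript$ with $i\notin J$ number $2^{n-1}$, each contributing one eigenvector for $+1$ and one for $-1$, for a total of $2^n=\dim\gscript(H)$. So it suffices to prove the $2^n$ vectors are orthonormal; they will then automatically span. Normalization is immediate (and already recorded in Theorem~\ref{thm33}): for $i\notin J$ the elements $e_J$ and $e_ie_J=\pm e_{J\cup\brac{i}}$ are distinct members of the orthonormal basis, so $\elbows{e_J,e_ie_J}=0$ and $\doubleab{\tfrac{1}{\sqrt{2}}(e_J\pm e_ie_J)}^2=\tfrac12(1+1)=1$. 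The organizing idea for orthogonality is that the map $J\mapsto J\cup\brac{i}$ pairs each index set avoiding $i$ with exactly one index set containing $i$, and every $K\in\jscript$ falls into exactly one such pair. Thus the orthonormal basis $\brac{e_K:K\in\jscript}$ splits into $2^{n-1}$ orthogonal two-dimensional blocks $V_J=\mathrm{span}\brac{e_J,e_ie_J}$, and $\gscript(H)$ is their orthogonal direct sum. Inside a single block the pair $\tfrac{1}{\sqrt{2}}(e_J+e_ie_J)$, $\tfrac{1}{\sqrt{2}}(e_J-e_ie_J)$ is an orthonormal basis of $V_J$ — this is the computation already displayed just after Theorem~\ref{thm33}. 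Vectors coming from different blocks are orthogonal because the blocks are orthogonal; equivalently, for $J\ne J'$ the four basis elements $e_J,e_ie_J,e_{J'},e_ie_{J'}$ are pairwise distinct, so every cross term vanishes. Assembling these facts, the full collection is an orthonormal set of size $2^n$ in a space of dimension $2^n$, hence an orthonormal basis.

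I expect no real obstacle here; the only point needing a little care is the bookkeeping that the pairing $J\leftrightarrow J\cup\brac{i}$ exhausts $\jscript$ without overlap, which guarantees both that the block decomposition is complete and that the $2^n$ eigenvectors are genuinely distinct rather than accidentally coincident. The cross-eigenvalue orthogonality could alternatively be invoked for free from the self-adjointness of $C_{e_i}$, but the block picture makes all cases uniform.
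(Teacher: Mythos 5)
Your proposal is correct and takes essentially the same route as the paper: part (i) is the same one-line computation from $e_1e_2=-e_2e_1$ (you merely phrase it through the left-multiplication operators $\overline{e_1e_2}$ and $\overline{e_2e_1}$), and part (ii) is the same argument of counting $2^n$ vectors and checking pairwise orthogonality, with orthogonality across distinct index sets coming from the disjointness of the pairs $\brac{e_{\Jsub},e_ie_{\Jsub}}$. The only minor difference is that your block decomposition verifies the $+/-$ orthogonality within each pair by the direct inner-product computation (which the paper itself records immediately after Theorem~\ref{thm33}), whereas the paper's proof invokes self-adjointness of $C_{e_i}$ for that step --- an alternative you yourself acknowledge.
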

\begin{proof}
(i)\enspace This follows from 
\begin{equation*}
C_{e_1}C_{e_2}a=C_{e_1}e_2a=e_1e_2a=-e_2e_1a=-C_{e_2}C_{e_1}a
\end{equation*}
for all $a\in\gscript (H)$.
(ii)\enspace By Theorem~\ref{thm33}, there are $2^n$ vectors of this form. Since eigenvectors corresponding to distinct eigenvalues of self-adjoint operators are orthogonal the first and second types are mutually orthogonal. Since $i\notin J_1,J_2$, if $J_1\ne J_2$ then the two terms $e_{\Jsub _1},e_ie_{\Jsub _1}$ are different than the two terms $e_{\Jsub _2},e_ie_{\Jsub _2}$. Hence, vectors of the first type are orthogonal to other vectors of the first type and similarly for vectors of the second type. It follows that these vectors form an orthonormal basis for
$\gscript (H)$.
\end{proof}

We now consider the creation operator $C_{e_1}\in\lscript _S\paren{\gscript (\complex ^2)}$ in more detail. The operator $C_{e_2}$ will be similar. Let $\psi _{+1},\psi _{+2}$ be the normalized eigenvectors corresponding to eigenvalue 1 and $\psi _{-1},\psi _{-2}$ be the normalized eigenvectors corresponding to eigenvalue $-1$. Let $P_{\psi _{+1}}$ be the projection onto $\psi _{+1}$. Then
\begin{equation*}
P_{\psi _{+1}}1=\elbows{\psi _{+1},1}\psi _{+1}=\frac{1}{2}\begin{bmatrix}1\\1\\0\\0\end{bmatrix}
\end{equation*}
and similarly
\begin{equation*}
P_{\psi _{+1}}e_1=\frac{1}{2}\begin{bmatrix}1\\1\\0\\0\end{bmatrix},\quad
   P_{\psi _{+1}}e_2=P_{\psi _{+1}}\iscript =\begin{bmatrix}0\\0\\0\\0\end{bmatrix}
\end{equation*}
We conclude that
\begin{equation*}
P_{\psi _{+1}}=\frac{1}{2}\begin{bmatrix}1&1&0&0\\1&1&0&0\\0&0&0&0\\0&0&0&0\end{bmatrix}
\end{equation*}
In a similar way we have
\begin{equation*}
P_{\psi _{+2}}=\frac{1}{2}\begin{bmatrix}0&0&0&0\\0&0&0&0\\0&0&1&1\\0&0&1&1\end{bmatrix}
\end{equation*}
The projection onto the eigenspace for eigenvalue 1 becomes
\begin{equation*}
P_+=\frac{1}{2}\begin{bmatrix}1&1&0&0\\1&1&0&0\\0&0&1&1\\0&0&1&1\end{bmatrix}
\end{equation*}
We consider $P_+$ to be the sharp effect that occurs when a fermion in the state $e_1$ is created.

Now let $P_{\psi _{-1}}$ be the projection onto $\psi _{-1}$. Then
\begin{equation*}
P_{\psi _{-1}}1=\elbows{\psi _{-1},1}\psi _{-1}=\frac{1}{2}\begin{bmatrix}1\\-1\\0\\0\end{bmatrix}
\end{equation*}
and similarly
\begin{equation*}
P_{\psi _{-1}}e_1=\frac{1}{2}\begin{bmatrix}-1\\1\\0\\0\end{bmatrix},\quad
   P_{\psi _{-1}}e_2=P_{\psi _{-1}}\iscript =\begin{bmatrix}0\\0\\0\\0\end{bmatrix}
\end{equation*}
We conclude that 
\begin{equation*}
P_{\psi _{-1}}=\frac{1}{2}\begin{bmatrix}1&-1&0&0\\-1&1&0&0\\0&0&0&0\\0&0&0&0\end{bmatrix}
\end{equation*}
In a similar way we have
\begin{equation*}
P_{\psi _{-2}}=\frac{1}{2}\begin{bmatrix}0&0&0&0\\0&0&0&0\\0&0&1&-1\\0&0&-1&1\end{bmatrix}
\end{equation*}
The projection onto the eigenspace for eigenvalue -1 becomes
\begin{equation*}
P_-=\frac{1}{2}\begin{bmatrix}1&-1&0&0\\-1&1&0&0\\0&0&1&-1\\0&0&-1&1\end{bmatrix}
\end{equation*}
We consider $P_-$ to be the sharp effect that occurs when a fermion in the state $e_1$ is annihilated. As expected we have
$P_++P_-=I$. If the system is initially in the vacuum state 1 them the probability that a fermion in the state $e_1$ is created becomes
\begin{align*}
P^1\hbox{(create }\iscript _1\text{)}=\elbows{1,P_+1}&=\frac{1}{2}\elbows{\begin{bmatrix}1\\0\\0\\0\end{bmatrix},
   \begin{bmatrix}1&1&0&0\\1&1&0&0\\0&0&1&1\\0&0&1&1\end{bmatrix}
    \begin{bmatrix}1\\0\\0\\0\end{bmatrix}}\\
    &=\frac{1}{2}\elbows{\begin{bmatrix}1\\0\\0\\0\end{bmatrix},\begin{bmatrix}1\\1\\0\\0\end{bmatrix}}=\frac{1}{2}
\end{align*}
In a similar way we obtain
\begin{align*}
P^1\hbox{(annihilate }e_1)&=P^{e_1}\hbox{(create }e_1)=P^{e_1}\hbox{(annihilate }e_1)\\
   &=P^{e_2}\hbox{(create }e_1)=P^{e_2}\hbox{(annihilate }e_1)=P^\iscript\hbox{(create }e_1)\\
   &=P^\iscript\hbox{(annihilate }e_1)=1/2
\end{align*}

More generally, suppose the system is in the state
\begin{equation*}
\psi =\frac{\alpha 1+\beta e_1}{\sqrt{\ab{\alpha}^2+\ab{\beta}^2}}
\end{equation*}
Then the probability that a fermion in the state $e_1$ is created becomes
\begin{align*}
P^\psi\hbox{(create }e_1)&=\elbows{\psi ,P_+\psi}=\frac{1}{2(\ab{\alpha}^2+\ab{\beta}^2)}\ 
   \elbows{\begin{bmatrix}\alpha\\\beta\\0\\0\end{bmatrix},\begin{bmatrix}1&1&0&0\\1&1&0&0\\0&0&1&1\\0&0&1&1\end{bmatrix}
   \begin{bmatrix}\alpha\\\beta\\0\\0\end{bmatrix}}\\
   &=\frac{1}{2(\ab{\alpha}^2+\ab{\beta}^2)}\ 
   \elbows{\begin{bmatrix}\alpha\\\beta\\0\\0\end{bmatrix},\begin{bmatrix}\alpha +\beta\\\alpha +\beta\\0\\0\end{bmatrix}}
   =\frac{\ab{\alpha +\beta}^2}{2(\ab{\alpha}^2+\ab{\beta}^2)}
\end{align*}
In particular, if $\alpha =\beta =1$ we have $P^\psi\hbox{(create }e_1)=1$

An operator $A\in\lscript _S(\gscript (\complex ^2))$ is an $e_1$-\textit{observable} if it has the form
\begin{equation*}
A=\lambda _1P_{\psi _{+1}}+\lambda _2P_{\psi _{+2}}+\lambda _3P_{\psi _{-1}}+\lambda _4P_{\psi _{-2}}
\end{equation*}
where $\lambda _1,\lambda _2,\lambda _3,\lambda _4\in\real$. In particular $C_{e_1}$ is an $e_1$-observable with
$\lambda _1=\lambda _2=1$, $\lambda _3=\lambda _4=-1$. An $e_1$-observable $A$ has the same eigenvectors as $C_{e_1}$ with corresponding eigenvalues $\lambda _1,\lambda _2,\lambda _3,\lambda _4$. Its general form is
\begin{equation*}
A=\frac{1}{2}\begin{bmatrix}\lambda _1+\lambda _3&\lambda _1-\lambda _3&0&0\\
   \lambda _1-\lambda _3&\lambda _1+\lambda _3&0&0\\0&0&\lambda _2+\lambda _4&\lambda _2-\lambda _4\\
   0&0&\lambda _2-\lambda _4&\lambda _2+\lambda _4\end{bmatrix}
\end{equation*}
If $A$ is measured, its possible outcomes are $\lambda _1,\lambda _2,\lambda _3,\lambda _4$ and when the system is in the state
$\rho$, its probability distribution is
\begin{align*}
P_\rho ^A(\lambda _1)&=\trace (\rho P_{\psi _{+1}}),\quad P_\rho ^A(\lambda _2)=\trace (\rho P_{\psi _{+2}})\\
P_\rho ^A(\lambda _3)&=\trace (\rho P_{\psi _{-1}}),\quad P_\rho ^A(\lambda _4)=\trace (\rho P_{\psi _{-2}})\\
\end{align*}

We now consider the 8-dimensional Hilbert algebra $\gscript (\complex ^3)$. We will establish a pattern that the reader will see carries over to higher dimensions. As before, we consider the creation operator $C_{e_1}\in\lscript _S\paren{\gscript (\complex ^3)}$ and the operators $C_{e_2}$, $C_{e_3}$ will be similar. Let $\psi _{+1},\psi _{+2},\psi _{+3},\psi _{+4}$ be the normalized eigenvectors corresponding to eigenvalue 1 and $\psi _{-1},\psi _{-2},\psi _{-3},\psi _{-4}$ be the normalized eigenvectors corresponding to eigenvalue $-1$. The corresponding projection operators become
\begin{equation*}
P_{\psi _{+1}}=\frac{1}{2}\begin{bmatrix}1&1&0&0&0&0&0&0\\1&1&0&0&0&0&0&0\\0&0&0&0&0&0&0&0\\&&&\vdots\\
0&0&0&0&0&0&0&0\end{bmatrix}
\end{equation*}

\begin{equation*}
P_{\psi _{+2}}=\frac{1}{2}\begin{bmatrix}0&0&0&0&0&0&0&0\\0&0&0&0&0&0&0&0\\0&0&1&0&1&0&0&0\\0&0&0&0&0&0&0&0\\
0&0&1&0&1&0&0&0\\0&0&0&0&0&0&0&0\\0&0&0&0&0&0&0&0\\0&0&0&0&0&0&0&0\end{bmatrix}
\end{equation*}

\begin{equation*}
P_{\psi _{+3}}=\frac{1}{2}\begin{bmatrix}0&0&0&0&0&0&0&0\\0&0&0&0&0&0&0&0\\0&0&0&0&0&0&0&0\\0&0&0&1&0&1&0&0\\
0&0&0&0&0&0&0&0\\0&0&0&1&0&1&0&0\\0&0&0&0&0&0&0&0\\0&0&0&0&0&0&0&0\end{bmatrix}
\end{equation*}

\begin{equation*}
P_{\psi _{+4}}=\frac{1}{2}\begin{bmatrix}0&0&0&0&0&0&0&0\\&&&\vdots&&&&\\0&0&0&0&0&0&0&0\\0&0&0&0&0&0&1&1\\
0&0&0&0&0&0&1&1\end{bmatrix}
\end{equation*}
The projection onto the eigenspace for eigenvalue 1 becomes
\begin{equation*}
P_+=P_{\psi _{+1}}+P_{\psi _{+2}}+P_{\psi _{+3}}+P_{\psi _{+4}}
\end{equation*}
The $-1$ projection operators are
\begin{equation*}
P_{\psi _{-1}}=\frac{1}{2}\begin{bmatrix}1&-1&0&0&0&0&0&0\\-1&1&0&0&0&0&0&0\\0&0&0&0&0&0&0&0\\&&&\vdots\\
0&0&0&0&0&0&0&0\end{bmatrix}
\end{equation*}

\begin{equation*}
P_{\psi _{-2}}=\frac{1}{2}\begin{bmatrix}0&0&0&0&0&0&0&0\\0&0&0&0&0&0&0&0\\0&0&1&0&-1&0&0&0\\0&0&0&0&0&0&0&0\\
0&0&-1&0&1&0&0&0\\0&0&0&0&0&0&0&0\\0&0&0&0&0&0&0&0\\0&0&0&0&0&0&0&0\end{bmatrix}
\end{equation*}

\begin{equation*}
P_{\psi _{-3}}=\frac{1}{2}\begin{bmatrix}0&0&0&0&0&0&0&0\\0&0&0&0&0&0&0&0\\0&0&0&0&0&0&0&0\\0&0&0&1&0&-1&0&0\\
0&0&0&0&0&0&0&0\\0&0&0&-1&0&1&0&0\\0&0&0&0&0&0&0&0\\0&0&0&0&0&0&0&0\end{bmatrix}
\end{equation*}

\begin{equation*}
P_{\psi _{-4}}=\frac{1}{2}\begin{bmatrix}0&0&0&0&0&0&0&0\\&&&\vdots&&&&\\0&0&0&0&0&0&0&0\\0&0&0&0&0&0&1&-1\\
0&0&0&0&0&0&-1&1\end{bmatrix}
\end{equation*}
The projection for eigenvalue $-1$ is $P_-=P_{\psi _{-1}}+P_{\psi _{-2}}+P_{\psi _{-3}}+P_{\psi _{-4}}$.

\section{Boson-Fermion Quantum Fields}  
We now briefly consider boson and general boson-fermion quantum field theories. Let $K$ be an $m$-dimensional Hilbert space. The corresponding $r$-\textit{boson Hilbert} space is the Foch space \cite{bgl95,blm96}
\begin{equation*}
H=\complex\oplus K\oplus K^2\oplus\cdots\oplus K^r
\end{equation*}
where $K^i=K\otimes K\otimes\cdots\otimes K (i\hbox{ factors})$ and unit vectors in $K^i$ represent states for $i$ bosons. The
\textit{vacuum space} is $\complex$ and we see that we have states for 0 to $r$ bosons. Letting $k=\dim H$ we have that 
\begin{align*}
k&=1+m+m^2+\cdots +m^r=1+m(1+m+m^2+\cdots +m^{r-1})\\
   &=1+m(k-m^r)=1+mk-m^{r+1}
\end{align*}
Hence, $(m-1)k=m^{r+1}-1$ so $k=\tfrac{m^{r+1}-1}{m-1}$. The simplest nontrivial case is $m=2$, $r=1$, $k=3$. We thus have one boson with two possible basis states $b_1,b_2$ and as we shall see there are three fermions. The next simplest case is $m=2$, $r=2$, $k=7$. In this case we have two bosons with basis states $b_1,b_2$, $b_1\otimes b_1$, $b_1\otimes b_2$, $b_2\otimes b_1$, $b_2\otimes b_2$ and we shall see there are 7 fermions.

Corresponding to $H$ we have the \textit{boson-fermion quantum field} $\gscript (H)$. This quantum field has $r$ bosons and
$k=\dim H=\tfrac{m^{r+1}-1}{m-1}$ fermions. As we have seen, $\gscript (H)$ is a Hilbert algebra with dimension $2^k$. We illustrate this quantum field for the two simple cases mentioned above. In the case $m=2$, $r=1$, $k=3$ we have one boson and three fermions. The bosons and fermions can interact and $\gscript (H)$ has $2^3=8$ basis elements. The Hilbert space $H=\complex\oplus K$ has three bases elements $v,b_1,b_2$ where $v$ is the boson vacuum state and $b_1,b_2$ are boson states. We write the basis states for
$\gscript (H)$ as
\begin{equation*}
\onebar ,\vbar ,\bbar _1,\bbar _2,\vbar\bbar _1,\vbar\bbar _2,\bbar _1\bbar _2,\iscriptbar
\end{equation*}
We interpret $\onebar$ as the fermion vacuum state, $\vbar$ is a fermion that has not interacted with a boson, $\bbar _i$ is a fermion that has interacted with a boson in state $b_i$, $i=1,2$, $\vbar\bbar _i$ represents two fermions the first of which does not interact with a boson and the second interacts with a boson in state $b_i$, $i=1,2$, $\bbar _1\bbar _2$ represents two fermions where the first interacts with a boson in state $b_1$ and the second interacts with a boson in state $b_2$. Finally $\iscriptbar =\vbar\bbar _1\bbar _2$ is the
anti-vacuum state.

Of course, the case $m=2$, $r=2$, $k=7$ is much more complicated because we have two bosons and 7 fermions. In this case
$\gscript (H)$ has $2^7=128$ basis elements. The basis states for $H$ are $v_i,b_1,b_2$, $b_1\otimes b_1$, $b_1\otimes b_2$,
$b_2\otimes b_1$, $b_2\otimes b_2$ and the basis states for $\gscript (H)$ are
\begin{align*}
&\onebar ,\vbar ,\bbar _1,\bbar _2,\overline{b_1\otimes b_1},\overline{b_2\otimes b_1},\overline{b_2\otimes b_2}\\
&\vbar\bbar _1,\vbar\bbar _2,\vbar \overline{(b_1\otimes b_1)},\vbar\overline{(b_1\otimes b_2)},
   \vbar\overline{(b_2\otimes b_1)},\vbar\overline{(b_2\otimes b_2)}\\
&\bbar _1\bbar _2,\bbar _1\overline{(b_1\otimes b_1)},\bbar _1\overline{(b_1\otimes b_2)}, \bbar _1\overline{(b_2\otimes b_1)},
   \bbar _1\overline{(b_2\otimes b_2)}\\
&\bbar _2\overline{(b_1\otimes b_1)},\bbar _2\overline{(b_1\otimes b_2)},\bbar _2\overline{(b_2\otimes b_1)},
   \bbar _2\overline{(b_2\otimes b_2)}\\  
&\overline{(b_1\otimes b_1)}\overline{(b_1\otimes b_2)},\overline{(b_1\otimes b_1)}\overline{(b_2\otimes b_1)},
   \overline{(b_1\otimes b_1)}\overline{(b_2\otimes b_2)}\\
&\overline{(b_1\otimes b_2)}\overline{(b_2\otimes b_1)},\overline{(b_1\otimes b_2)}\overline{(b_2\otimes b_2)},
   \overline{(b_2\otimes b_1)}\overline{(b_2\otimes b_2)}\\
&\vbar\bbar _1\bbar _2,\vbar\bbar _1\overline{(b_1\otimes b_1)},\ldots\\
&\qquad\vdots\\
&\vbar\bbar _1\bbar _2\overline{(b_1\otimes b_1)}\overline{(b_1\otimes b_2)}\overline{(b_2\otimes b_1)},\ldots
    \bbar _1\bbar _2\overline{(b_1\otimes b_1)}\overline{(b_1\otimes b_2)}\overline{(b_2\otimes b_1)}\overline{(b_2\otimes b_2)}
    \iscriptbar
\end{align*}
In this case, we interpret $\overline{b_1\otimes b_1}$ as a fermion that interacts with two bosons both of which in the state
$b_1,\bbar _1\overline {(b_1\otimes b_1)}$ represents two fermions, the first of which interacts with a boson in state $b_1$ and the second interacts with two bosons in the state $b_1\otimes b_1,\vbar\bbar _1\bbar _2$ represents three fermions, the first of which interacts with no boson, the second interacts with a boson in state $b_1$ and the third interacts with a boson in state $b_2$. Higher order cases get exponentially larger. For example, the case $m=3$, $r=2$, $k=13$ with two bosons and 13 fermions gives $\gscript (H)$ with
$2^{13}=8,192$ basis elements.

\section{Evolution Operators}  
An operator $U$ on $\gscript (H)$ is unitary if and only if there exists a self-adjoint operator $A$ on $\gscript (H)$ such that $U=e^{i\pi A}$ where the constant $\pi$ is for convenience and is not necessary \cite{hz12,nc00}. We define the \textit{evolution operator}
$U_t=e^{i\pi tA}$, where $t\in\sqparen{0,\infty}$ represents the time and $A$ is called a \textit{Hamiltonian} for the system \cite{bgl95,blm96}. If $\phi$ is a state on $\gscript (H)$, then $U_t(\phi )$ gives the evolution of $\phi$ relative to the Hamiltonian $A$.
If $A$ has spectral representation $A=\sum\lambda _jP_j$, $\lambda _j\in\real$, then 
\begin{equation*}
U_t=e^{i\pi tA}=\sum _je^{i\pi t\lambda _j}P_j=\sum _j\sqbrac{\cos (\pi t\lambda _j)+i\sin (\pi t\lambda _j)}P_j
\end{equation*}
For example, in $\gscript (\complex ^2)$, $C_{e_1}$ is self-adjoint and using $C_{e_1}$ as the Hamiltonian we have
\begin{align*}
U_t=e^{i\pi tC_{e_1}}&=\frac{e^{i\pi t}}{2}  \begin{bmatrix}1&1&0&0\\1&1&0&0\\0&0&1&1\\0&0&1&1\end{bmatrix}
   +\frac{e^{-i\pi t}}{2}\begin{bmatrix}1&-1&0&0\\-1&1&0&0\\0&0&1&-1\\0&0&-1&1\end{bmatrix}\\
&=\begin{bmatrix}\cos (\pi t)&i\sin (\pi t)&0&0\\i\sin (\pi t)&\cos (\pi t)&0&0\\0&0&\cos (\pi t)&i\sin (\pi t)
\\0&0&i\sin (\pi t)&\cos (\pi t)\end{bmatrix}
\end{align*}
In particular, the states $1,e_1,e_2,\iscript$ evolve according to
\begin{align*}
U_t(1)&=\cos (\pi t)1+i\sin (\pi t)e_1\\
U_t (e_1)&=i\sin (\pi t)1+\cos (\pi t)e_1\\
U_t(e_2)&=\cos (\pi t)e_2+i\sin (\pi t)\iscript\\
U_t(\iscript )&=i\sin (\pi t)\iscript _2+\cos (\pi t)\iscript
\end{align*}
Another way to view this is to use the fact that $C_{e_1}$ is unitary so\newline
$C_{e_1}=e^{i\pi A}$ and apply the Hamiltonian $A=\tfrac{-i}{\pi}\ln C_{e_1}$. Since\newline
$\ln (-1)=i\pi$ we have
\begin{equation*}
A=\tfrac{-i}{\pi}\ln C_{e_1}=\tfrac{-i}{\pi}\sqbrac{\ln (1)P_++\ln (-1)P_-}=-\paren{\tfrac{i}{\pi}}i\pi P_-=P_-
\end{equation*}
Hence, letting $U_t=e^{i\pi tA}$ we obtain
\begin{equation*}
U_t=e^{i\pi tP_-}=e^{i\pi t}P_-=\frac{1}{2}
\begin{bmatrix}1+e^{i\pi t}&1-e^{i\pi t}&0&0\\1-e^{i\pi t}&1+e^{i\pi t}&0&0\\0&0&1+e^{i\pi t}&1-e^{i\pi t}
\\0&0&1-e^{i\pi t}&1+e^{i\pi t}\end{bmatrix}
\end{equation*}
In this case, the states $1,e_1,e_2,\iscript$ evolve according to
\begin{align*}
U_1(1)&=\tfrac{1}{2}(1+e^{i\pi t})1+\tfrac{1}{2}(1-e^{i\pi t})e_1\\
U_t(e_1)&=\tfrac{1}{2}(1-e^{i\pi t})1+\tfrac{1}{2}(1+e^{i\pi t})e_1\\
U_t(e_2)&=\tfrac{1}{2}(1+e^{i\pi t})e_2+\tfrac{1}{2}(1-e^{i\pi t})\iscript\\
U_t(\iscript )&=\tfrac{1}{2}(1-e^{i\pi t})e_2+\tfrac{1}{2}(1+e^{i\pi t})\iscript
\end{align*}

Let $A=\lambda _1P_{\psi _{+1}}+\lambda _2P_{\psi _{+2}}+\lambda _3P_{\psi _{-1}}+\lambda _4P_{\psi _{-2}},\lambda _i\in\real$ be a
$C_{e_1}$ observable in $\gscript (\complex ^2)$. The corresponding evolution operator is
\begin{align*}
&U_t=e^{i\pi tA}=e^{i\pi t\lambda _1}P_{\psi _{+1}}+e^{i\pi t\lambda _2}P_{\psi _{+2}}
   +e^{i\pi t\lambda _3}P_{\psi _{-1}}+e^{i\pi t\lambda _4}P_{\psi _{-2}}\\
   &=\frac{e^{i\pi t\lambda _1}}{2}\begin{bmatrix}1&1&0&0\\1&1&0&0\\0&0&0&0\\0&0&0&0\end{bmatrix}
   +\frac{e^{i\pi t\lambda _2}}{2}\begin{bmatrix}0&0&0&0\\0&0&0&0\\0&0&1&1\\0&0&1&1\end{bmatrix}\\
   &\quad +\frac{e^{i\pi t\lambda _3}}{2}\begin{bmatrix}1&-1&0&0\\-1&1&0&0\\0&0&0&0\\0&0&0&0\end{bmatrix}
   +\frac{e^{i\pi t\lambda _4}}{2}\begin{bmatrix}0&0&0&0\\0&0&0&0\\0&0&1&-1\\0&0&-1&1\end{bmatrix}\\
   &=\frac{1}{2}\begin{bmatrix}e^{i\pi t\lambda _1+e^{i\pi t\lambda _3}}&e^{i\pi t\lambda _1}-e^{-i\pi t\lambda _3}&0&0\\
   e^{i\pi t\lambda _1}-e^{i\pi t\lambda _3}&e^{i\pi t\lambda _1}+e^{i\pi t\lambda _3}&0&0\\
   0&0&e^{i\pi t\lambda _2}+e^{i\pi t\lambda _4}&e^{i\pi t\lambda _2}-e^{i\pi t\lambda _4}
   \\0&0&e^{i\pi t\lambda _2}-e^{i\pi t\lambda _4}&e^{i\pi t\lambda _2}+e^{i\pi t\lambda _4}\end{bmatrix}
\end{align*}
The evolution of the states $1,e_1,e_2,\iscript$ are given by
\begin{align*}
U_t(1)&=\tfrac{1}{2}(e^{i\pi t\lambda _1}+e^{i\pi t\lambda _3})1+\tfrac{1}{2}(e^{i\pi t\lambda _1}-e^{i\pi t\lambda _3})e_1\\
U_t(e_1)&=\tfrac{1}{2}(e^{i\pi t\lambda _1}-e^{i\pi t\lambda _3})1+\tfrac{1}{2}(e^{i\pi t\lambda _1}+e^{i\pi t\lambda _3})e_1\\
U_t(e_2)&=\tfrac{1}{2}(e^{i\pi t\lambda _2}+e^{i\pi t\lambda _4})e_2+\tfrac{1}{2}(e^{i\pi t\lambda _2}-e^{i\pi t\lambda _4})\iscript\\
U_t(\iscript )&=\tfrac{1}{2}(e^{i\pi t\lambda _2}-e^{i\pi t\lambda _4})e_2+\tfrac{1}{2}(e^{i\pi t\lambda _2}+e^{i\pi t\lambda _4})\iscript\\
\end{align*}

We next consider the operator $\iscriptbar$ on $\gscript (\complex ^2)$. We know that $\iscriptbar$ is unitary and since
\begin{align*}
\iscriptbar (e_1+ie_2)&=i(e_1+ie_2),\quad\iscript (e_1-ie_2)=-i(e_1-ie_2)\\
\iscript (1+i\iscript )&=-i(1+i\iscript ),\quad \iscript (1-i\iscript )=i(1-i\iscript )
\end{align*}
the eigenvalue $i$ has eigenvectors $\tfrac{1}{\sqrt{2}}(e_1+ie_2)$ and $\tfrac{1}{\sqrt{2}}(1-i\iscript )$ while the eigenvalue $-i$ has eigenvectors $\tfrac{1}{\sqrt{2}}(e_1-ie_2)$ and $\tfrac{1}{\sqrt{2}}(1+i\iscript )$. The projection onto the eigenspace for $i$ is
\begin{equation*}
P_{(i)}=\frac{1}{2}\begin{bmatrix}1&0&0&i\\0&1&-i&0\\0&i&1&0\\-i&0&0&1\end{bmatrix}
\end{equation*}
and the projection onto the eigenspace for $-i$ is
\begin{equation*}
P_{(-i)}=\frac{1}{2}\begin{bmatrix}1&0&0&-i\\0&1&i&0\\0&-i&1&0\\i&0&0&1\end{bmatrix}
\end{equation*}
We now find the Hamiltonian $A$ for the operator $\iscriptbar$. Since $\ln (i)=\tfrac{\pi}{2}i$ and $\ln (-i)=-\tfrac{\pi}{2}i$ and
$\iscriptbar =e^{i\pi A}$ we conclude that
\begin{equation*}
A=\tfrac{-i}{\pi}\ln (\iscriptbar )=\tfrac{-1}{\pi}\sqbrac{\ln (i)P_{(i)}+\ln (-i)P_{(-i)}}=\tfrac{1}{2}P_{(i)}-\tfrac{1}{2}P_{(-i)}
\end{equation*}
The dynamics for $\iscriptbar$ becomes
\begin{align*}
U_t&=e^{i\pi tA}=e^{i\tfrac{\pi}{2}t}P_{(i)}+e^{-i\tfrac{\pi}{2}t}P_{(-i)}
   =(\cos\tfrac{\pi}{2}t+i\sin\tfrac{\pi}{2}t)P_{(i)}+(\cos\tfrac{\pi}{2}t-i\sin\tfrac{\pi}{2}t)P_{(-i)}\\
  &=(\cos\tfrac{\pi}{2}t)I+i\sin\tfrac{\pi}{2}t\sqbrac{P_{(i)}-P_{(-i)}}=(\cos\tfrac{\pi}{2}t)I+\sin\tfrac{\pi}{2}t
  \begin{bmatrix}0&0&0&-1\\0&0&1&0\\0&-1&0&0\\1&0&0&0\end{bmatrix}\\
  &=\begin{bmatrix}\cos\tfrac{\pi}{2}t&0&0&-\sin\tfrac{\pi}{2}t\\0&\cos\tfrac{\pi}{2}t&\sin\tfrac{\pi}{2}t&0\\
  0&-\sin\tfrac{\pi}{2}t&\cos\tfrac{\pi}{2}t&0\\\sin\tfrac{\pi}{2}t&0&0&\cos\tfrac{\pi}{2}t\end{bmatrix}
\end{align*}
The evolution for the states $1,e_1,e_2,\iscript$ are given by
\begin{align*}
U_t(1)&=(\cos\tfrac{\pi}{2}t)1+(\sin\tfrac{\pi}{2}t)\iscript\\
U_t(e_1)&=(\cos\tfrac{\pi}{2}t)1-(\sin\tfrac{\pi}{2}t)e_2\\
U_t(e_2)&=(\sin\tfrac{\pi}{2}t)e_1+(\cos\tfrac{\pi}{2}t)e_2\\
U_t(\iscript )&=-(\sin\tfrac{\pi}{2}t)e_1+(\cos\tfrac{\pi}{2}t)\iscript
\end{align*}
We would like to point out the similarity between the operator $\iscriptbar$ on $\gscript (\complex ^2)$ and the operator
$\overline{e_1e_2}$ on $\gscript (\complex ^3)$. The eigenvalues of $\overline{e_1e_2}$ are $i$ and $-i$ and the eigenvectors for $i$ are $\tfrac{1}{\sqrt{2}}(e_1+ie_2)$, $\tfrac{1}{\sqrt{2}}(e_3-i\iscript )$, $\tfrac{1}{\sqrt{2}}(1-ie_1e_2)$, $\tfrac{1}{\sqrt{2}}e_3(e_1+ie_2)$ and the eigenvectors for $-i$ are $\tfrac{1}{\sqrt{2}}(e_1-ie_2)$, $\tfrac{1}{\sqrt{2}}(e_3+i\iscript )$, $\tfrac{1}{\sqrt{2}}(1+ie_1e_2)$,
$\tfrac{1}{\sqrt{2}}e_3(e_1-ie_2)$. The dynamics for $\overline{e_1e_2}$ are simpler but more complicated than that of $\iscript$.

\section{Extension Operators}  
We now discuss extensions of operators from $H$ to $\gscript (H)$. Let $\dim H=n$ and $B\in\lscript (H)$. If
$\alpha =(\alpha _1,\alpha _2,\ldots ,\alpha _n)$, $\alpha _i\in\complex$ we define
\begin{align*}
B^\alpha (1)&=\alpha _11\\
B^\alpha (e_i)&=B(e_i), i=1,2,\ldots ,n\\
B^\alpha (e_1e_j)&=\alpha _2B(e_i)e_j+\alpha _2e_iB(e_j)\hbox{ for }i, j=1,2,\ldots ,n\hbox{ with }i<j\\
B^\alpha (e_ie_je_k)&=\alpha _3\sqbrac{B(e_i)e_je_k+e_iB(e_j)e_k+e_ie_jB(e_k)}\\
&\qquad\hbox{for }i,j,k=1,2,\ldots ,n\hbox{ with }i<j<k\\
&\quad\vdots\\
B^\alpha (e_{i_1}e_{i_2}\cdots e_{i_k})
&=\alpha _k\sqbrac{B(e_{i_1})e_{i_2}\cdots e_{i_k}+e_{i_1}B(e_{i_2})\cdots e_{i_k}+\cdots +e_{i_1}e_{i_2}\cdots B(e_{i_k})}\\
&\qquad\hbox{for }i_1,i_2,\ldots ,i_k=1,2,\ldots ,n\hbox{ with }i_1<i_2<\cdots <i_k\\
&\quad\vdots\\
B^\alpha (\iscript )&=\alpha _n\sqbrac{B(e_1)e_2\cdots e_n+\cdots +e_1e_2\cdots B(e_n)}
\end{align*}
and extend $B^\alpha$ to $\gscript (H)$ by linearity. Then $B^\alpha\in\lscript\paren{\gscript (H)}$ and we call $B^\alpha$ the
$\alpha$-\textit{extension} of $B$. For example, if $\alpha _i=0$ for $i=1,2,\ldots ,n$, we call $B^\alpha$ the \textit{trivial extension} of $B$ and if $\alpha _i=\tfrac{1}{i}$ for $i=1,2,\ldots ,n$, we call $B^\alpha$ the \textit{simple extension} of $B$. Notice that
$(\beta B)^\alpha =\beta B^\alpha$ for any $\beta\in\complex$ and $(A+B)^\alpha =A^\alpha +B^\alpha$ for any $A,B\in\lscript (H)$. However, $(AB)^\alpha\ne A^\alpha B^\alpha$ in general. For example, let $P\in\lscript (H)$ be the projection onto $e_1$. Then
$P^\alpha (e_1e_2)=\alpha _2e_1e_2$ and
$P^\alpha P^\alpha (e_1e_2)=\alpha _2^2e_1e_2\ne P^\alpha (e_1e_2)=(PP)^\alpha (e_1e_2)$. This also shows that if $P$ is a projection, then $P^\alpha$ need not be a projection. Letting $I\in\lscript (H)$ be the identity operator, we have $I^\alpha (e_{i_1}e_{i_2}\cdots e_{i_j})=j\alpha _je_{i_1}e_{i_2}\cdots e_{i_j}$. Hence, $I^\alpha\in\lscript (\gscript (H))$ is the identity operator if and only if
$\alpha _j=\tfrac{1}{j}$ which is equivalent to $I^\alpha$ being a simple extension of $I$.

\begin{thm}    
\label{thm61}
If $B\in\lscript _S(H)$ and $\alpha _i\in\real$, then $B^\alpha\in\lscript _S(\gscript (H))$.
\end{thm}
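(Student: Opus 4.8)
The plan is to reduce the whole statement to a question about the matrix of $B^\alpha$ in the orthonormal basis $\brac{e_{\Jsub}\colon J\in\jscript}$. By Lemma~\ref{lem13} applied in the Hilbert algebra $\gscript (H)$, the operator $B^\alpha$ is self-adjoint exactly when
\begin{equation*}
\elbows{e_{\Ksub},B^\alpha e_{\Jsub}}=\overline{\elbows{e_{\Jsub},B^\alpha e_{\Ksub}}}\qquad\text{for all }J,K\in\jscript,
\end{equation*}
that is, when the matrix $\sqbrac{B^\alpha}$ is Hermitian. Before computing anything I would exploit linearity: since $(A+B)^\alpha=A^\alpha+B^\alpha$ and $(\beta B)^\alpha=\beta B^\alpha$, the assignment $B\mapsto B^\alpha$ is real-linear on $\lscript _S(H)$, and a real-linear combination of self-adjoint operators is again self-adjoint. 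Hence it suffices to verify the claim for $B$ ranging over a fixed real spanning set of $\lscript _S(H)$, for instance the diagonal generators $E_{ii}=\elbows{e_i,\tbullet}e_i$ together with the off-diagonal self-adjoint generators assembled from $e_i,e_j$ with $i\neq j$. This cuts the general Hermitian $B$ down to a handful of elementary cases.

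Writing $Be_{j_r}=\sum _sB_{sj_r}e_s$ with $B_{sj_r}=\elbows{e_s,Be_{j_r}}$ and $J=\brac{j_1<\cdots<j_p}$, the definition of the extension gives
\begin{equation*}
B^\alpha e_{\Jsub}=\alpha _p\sum _{r=1}^p\sum _sB_{sj_r}\,e_{j_1}\cdots e_{j_{r-1}}e_se_{j_{r+1}}\cdots e_{j_p}.
\end{equation*}
For a diagonal generator $E_{ii}$ only the summand with $s=j_r=i$ survives, so $E_{ii}^\alpha e_{\Jsub}=\alpha _pe_{\Jsub}$ when $i\in J$ and $0$ otherwise; because $\alpha _p\in\real$ this is a real-diagonal, hence self-adjoint, matrix, and the diagonal part is disposed of at once. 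The remaining off-diagonal generators are where the content lies, and there one first records a structural feature: since each factor $e_{j_r}$ is merely \emph{replaced} by $Be_{j_r}\in H$, the operator $B^\alpha$ never raises grade, so $B^\alpha e_{\Jsub}$ lies in grades $\le p$. For a Hermitian matrix that is grade-non-increasing every strictly grade-lowering entry must vanish, so the proof reduces to showing that all grade-lowering contributions cancel.

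Those contributions come precisely from the summands in which the inserted index $e_s$ already occurs in $J$, where $e_se_s=1$ collapses the product to grade $p-2$. Reordering each such summand via $e_ae_b=-e_be_a$ into a signed basis vector $\pm e_{\Ksub}$ and tracking the parity of the sort is routine; the crux is, for a fixed target $K=J\setminus\brac{j_a,j_b}$, to combine the two ways of producing it — replacing slot $a$ by $e_{j_b}$ or slot $b$ by $e_{j_a}$ — and to show that, under $B_{j_bj_a}=\overline{B_{j_aj_b}}$ and $\alpha _p\in\real$, their signed contributions balance, while the grade-preserving entries match by the same conjugation identity with no collapse and present no difficulty. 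The hard part will be exactly this sign bookkeeping for the collapsed, grade-lowering terms: the anticommutation parities attached to the two routes $(J\to K)$ must be pinned down and shown to force cancellation rather than reinforcement. This is the step I expect to be the main obstacle, and the place where self-adjointness of $B$ and reality of the $\alpha _i$ must do all the work; I would attack it by an induction on $p$ that peels off the leading creation operator $C_{e_{j_1}}$ (self-adjoint by Theorem~\ref{thm32}), or by pushing the explicit rank-one computation for each off-diagonal generator.
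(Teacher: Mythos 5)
Your structural reductions are sound as far as they go: Lemma~\ref{lem13} does reduce everything to Hermiticity of the matrix $\sqbrac{\elbows{e_{\Ksub},B^\alpha e_{\Jsub}}}$, real-linearity of $B\mapsto B^\alpha$ legitimately cuts $B$ down to generators, the diagonal generators and the grade-preserving entries behave exactly as you say, and your observation that $B^\alpha$ maps grade $p$ into grades $p$ and $p-2$ (so that Hermiticity forces every grade-lowering entry to vanish) is the right way to frame the problem. The gap is that you postponed the decisive cancellation, and that cancellation is false. Carry out your own sign bookkeeping: for $J=\brac{j_1<\cdots<j_p}$, $a<b$, $K=J\setminus\brac{j_a,j_b}$, replacing slot $a$ by $e_{j_b}$ forces the inserted $e_{j_b}$ past $b-a-1$ factors before it collapses, while replacing slot $b$ by $e_{j_a}$ forces the inserted $e_{j_a}$ past the same $b-a-1$ factors; the two parities are \emph{equal}, not opposite, so the coefficient of $e_{\Ksub}$ in $B^\alpha e_{\Jsub}$ is
\begin{equation*}
\alpha _p(-1)^{b-a-1}\paren{B_{j_bj_a}+B_{j_aj_b}}=2\alpha _p(-1)^{b-a-1}\,\mathrm{Re}\paren{B_{j_bj_a}}
\end{equation*}
using $B_{j_aj_b}=\overline{B_{j_bj_a}}$. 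Self-adjointness of $B$ produces reinforcement, not cancellation; only purely imaginary off-diagonal entries die.

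The consequence is that the statement itself fails for the geometric product, so no repair of your plan (induction on $p$, peeling off $C_{e_{j_1}}$, etc.) can succeed. Minimal counterexample, and it is one of your own off-diagonal generators: $n=2$, $Be_1=e_2$, $Be_2=e_1$ (real symmetric, hence in $\lscript _S(H)$), any real $\alpha$ with $\alpha _2\ne 0$. Then $B^\alpha (e_1e_2)=\alpha _2(e_2e_2+e_1e_1)=2\alpha _2$, so $\elbows{1,B^\alpha e_1e_2}=2\alpha _2\ne 0=\elbows{B^\alpha 1,e_1e_2}$, which is precisely a forbidden grade-lowering entry (your skew generator with $Be_1=ie_2$, $Be_2=-ie_1$ does pass, consistent with the formula above). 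For comparison, the paper's own proof reaches the conclusion only through an invalid manipulation: after invoking \eqref{eq61} it moves the scalars $B_{j_mt}$ into the left slot of the inner product without conjugating them, and then identifies expressions of the form $\alpha _s\sum _tB_{j_mt}e_te_{i_1}\cdots e_{i_k}$ with $B^\alpha e_{i_1}\cdots e_{i_k}$, which has the wrong prefactor ($\alpha _k$, not $\alpha _s$), the wrong matrix entries ($B_{i_lt}$, not $B_{j_mt}$), and replaced factors rather than a prepended $e_t$; the worked example following the theorem never tests the offending pair $(1,e_1e_2)$. The theorem, and your strategy for it, would be correct in the exterior algebra, where $e_se_s=0$ kills the collapsed terms (this is ordinary fermionic second quantization), or under the extra hypothesis that the off-diagonal part of $B$ is purely imaginary; it is not correct as stated.
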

\begin{proof}
Let $B(e_i)=\sum\limits _{j=1}^nB_{ij}e_j$, $i=1,2,\ldots ,n$. Since $B$ is self-adjoint, we have $B_{ij}=\Bbar_{ji}$. Now
$\elbows{e_{i_1}\cdots e_{i_k},e_re_{j_1}\cdots e_{j_s}}\ne 0$ if and only if $e_{i_1}\cdots e_{i_k}=\pm e_re_{j_1}\cdots e_{j_s}$ and in both of these cases we have
\begin{equation}                
\label{eq61}
\elbows{e_{e_1}\cdots e_{i_k},e_re_{j_1}\cdots e_{j_s}}=\elbows{e_re_{i_1}\cdots e_{i_k},e_{j_1}\cdots e_{j_s}}
\end{equation}
We then obtain by \eqref{eq61} that
\begin{align*}
&\elbows{e_{i_1}e_{i_2}\cdots e_{i_r},B^\alpha e_{j_1}e_{j_2}\cdots e_{j_s}}=\alpha _s
\left[\elbows{e_{i_1}e_{i_2}\cdots e_{i_k},B(e_{j_1})e_{j_2}\cdots e_{j_s}}\right.\\
&\qquad\left.+\cdots +\elbows{e_{i_1}e_{i_2}\cdots e_{i_k},e_{j_1}\cdots e_{j_{s-1}}B(e_{j_s})}\right]\\
&=\alpha _s\left[\elbows{e_{i_1}e_{i_2}\cdots e_{i_k},\sum _tB_{j_1t}(e_t)e_{j_2}\cdots e_{j_s}}\right.\\
&\qquad\left.+\cdots +\elbows{e_{i_1}e_{i_2}\cdots e_{i_k},e_{j_1}\cdots e_{j_{s-1}}\sum _tB_{j_st}(e_t)}\right]\\
&=\alpha _s
\left[\sum _tB_{j_1t}\elbows{e_{i_1}e_{i_2}\cdots e_{i_k},e_te_{j_2}\cdots e_{j_s}}+\cdots +\right.\\
&\qquad\left.\sum _tB_{j_st}\elbows{e_{i_1}e_{i_2}\cdots e_{i_k},e_{j_1}\cdots e_{j_{s-1}}e_t}\right]\\
&=\alpha _s\left[\elbows{\sum _tB_{j_1t}(e_t)e_{i_1}\cdots e_{i_k},e_{j_2}\cdots e_{j_s}}+\cdots +\right.\\
&\qquad\left.\elbows{\sum _tB_{j_st}(e_t)e_{i_1}e_{i_2}\cdots e_{i_k},e_{j_1}e_{j_2}\cdots e_{j_{s-1}}}\right]\\
&=\elbows{B^\alpha e_{i_1}e_{i_2}\cdots e_{i_r},e_{j_1}e_{j_2}\cdots e_{j_s}}
\end{align*}
It follows that $B^\alpha$ is self-adjoint.
\end{proof}

\begin{example}  
We now illustrate the proof of Theorem~\ref{thm61} with the example $H=\complex ^3$. Let
$\alpha =\brac{\alpha _1,\alpha _2,\alpha _3}\subseteq\real ^3$ and let $B\in\lscript _S(H)$ with $B(e_i)=\sum _jB_{ij}e_j$ so that 
$B_{ij}=\Bbar _{ji}$. Unlike the proof of Theorem~\ref{thm61}, we treat the various cases individually. Clearly,
$\elbows{e_1,B^\alpha e_2}=\elbows{B^\alpha e_1,e_2}$, $\elbows{e_1,B^\alpha e_1e_2}=\elbows{B^\alpha e_1,e_1e_2}=0$,
$\elbows{e_1,B^\alpha\iscript}=\elbows{B^\alpha e_1,\iscript}=0$, $\elbows{1,B^\alpha e_1}=\elbows{B^\alpha 1,e_1}=0$. We also have
\begin{align*}
\elbows{e_1e_2,B^\alpha\iscript}&
   =\alpha _3\sqbrac{\elbows{e_1e_2,Be_1e_2e_3}+\elbows{e_1e_2,e_1Be_2e_3}+\elbows{e_1e_2,e_1e_2Be_3}}\\
   &=0=\elbows{B^\alpha e_1e_2,\iscript}
\end{align*}
Moreover,
\begin{align*}
\elbows{e_1e_2,B^\alpha e_1e_2}&=\alpha _2\sqbrac{\elbows{e_1e_2,Be_1e_2}+\elbows{e_1e_2,e_1Be_2}}\\
&=\alpha _2\sqbrac{\elbows{e_1e_2,B_{11}e_1e_2+B_{12}e_2e_2+B_{13}e_3e_2}}\\
&\qquad +\alpha _2\sqbrac{\elbows{e_1e_2,e_1B_{21}e_1+e_1B_{22}e_2+e_1B_{23}e_3}}\\
&=\alpha _2(B_{11}+B_{22})=\alphabar _2(\Bbar _{11}+\Bbar _{22})=\overline{\elbows{e_1e_2,B^\alpha e_1e_2}}\\
&=\elbows{B^\alpha e_1e_2,e_1e_2}
\end{align*}
and finally
\begin{align*}
\elbows{e_1e_2,B^\alpha e_1e_3}&=\alpha _2\sqbrac{\elbows{e_1e_2,Be_1e_3}+\elbows{e_1e_2,e_1Be_3}}\\
&=\alpha _2\left[\elbows{e_1e_2,(B_{11}e_1+B_{12}e_2+B_{13}e_3)e_3}\right.\\
&\qquad +\left.\elbows{e_1e_2,e_1(B_{31}e_1+B_{32}e_2+B_{33}e_3)}\right]\\
&=\alpha _2B_{32}=\alpha _2\Bbar _{23}\\
&=\alpha _2\left[\elbows{(B_{11}e_1+B_{12}e_2+B_{13}e_3)e_2,e_1e_3}\right.\\
&\qquad\left. +\elbows{e_1(B_{21}e_1+B_{22}e_2+B_{23}e_3),e_1e_3}\right]\\
&=\alpha _2\sqbrac{\elbows{Be_1e_2,,e_1e_3}+\elbows{e_1Be_2,e_1e_3}}\\
&=\elbows{B^\alpha e_1e_2,e_1e_3}\hskip 14pc\square
\end{align*}
\end{example}

A great simplification occurs if $A\in\lscript (H)$ is diagonal with respect to the basis $e_1,e_2,\ldots ,e_n$. In this case
$A=\sum\limits _{i=1}^n\lambda _iP_i$ where $\lambda _i\in\real$ and $P_i$ is the projection onto $e_i$, $i=1,2,\ldots ,n$. If
$\alpha =(\alpha _1,\alpha _2,\ldots ,\alpha _n)\in\real ^n$ we obtain $A^\alpha\in\lscript _S(\gscript (H))$ with
$A^\alpha (1)=\alpha _1$, $A^\alpha (e_i)=A(e_i)=\lambda _ie_i$, $i=1,2,\ldots ,n$,
\begin{align*}
A^\alpha (e_ie_j)&=\alpha _2\sqbrac{A(e_i)e_j+e_iA(e_j)}=\alpha _2(\lambda _i+\lambda _j)e_ie_j\\
A^\alpha (e_ie_je_k)&=\alpha _3\sqbrac{A(e_i)e_je_k+e_iA(e_j)e_k+e_ie_jA(e_k)}\\
&=\alpha _3(\lambda _i+\lambda _j+\lambda _k)e_ie_je_k\\
&\quad\vdots\\
A^\alpha (\iscript )&=\alpha _n(\lambda _1+\lambda _2+\cdots +\lambda _n)e_1e_2\cdots e_n
\end{align*}
The eigenvalues of $A^\alpha$ are $\alpha _1,\lambda _i$, $i=1,2,\ldots ,n$, $\alpha _2(\lambda _i+\lambda _j)$, $i,j=1,2,\ldots ,n$,
$\alpha _3(\lambda _i+\lambda _j+\lambda _k)$, $i,j,k=1,2,\ldots ,n\ldots ,\alpha _n(\lambda _1+\lambda _2+\cdots +\lambda _n)$. The corresponding eigenvectors are the basis $1,e_i,e_ie_j,e_ie_je_k,\ldots ,\iscript$. Considering $A^\alpha$ to be the Hamiltonian for the system, the corresponding dynamics is given by
\begin{align*}
U_t^\alpha (1)&=e^{i\pi\alpha _1t}1\\
U_t^\alpha (e_j)&=e^{i\pi\lambda _jt}e_j\\
U_t^\alpha (e_re_j)&=e^{i\pi\alpha _2(\lambda _r+\lambda _h)t}e_re_j\\
U_t^\alpha (e_re_je_k)&=e^{i\pi\alpha _3(\lambda _r+\lambda _j+\lambda _k)t}e_re_je_k,\ldots\\
U_t^\alpha (\iscript )&=e^{i\pi\alpha _n(\lambda _1+\lambda _2+\cdots +\lambda _n)t}\iscript
\end{align*}

We close by showing that this work extends to infinite dimensional separable Hilbert spaces.

\begin{thm}    
\label{thm62}
Let $H$ be a separable infinite dimensional Hilbert space with orthonormal basis $e_1,e_2,\ldots\,$. Then there exists a unique separable infinite dimensional Hilbert geometric algebra $\gscript (H)$ with the following properties:
{\rm{(i)}}\enspace $H\subseteq\gscript (H)$,
{\rm{(ii)}}\enspace If $u\in H$, then $\elbows{\utilde,u}=uu$.
{\rm{(iii)}}\enspace $\gscript (H)$ has the orthonormal bassis given by
\begin{align*}
&1\\
&\brac{e_i\colon i=1,2,\ldots}\\
&\brac{e_ie_j\colon i<j,\ i,j=1,2,\ldots}\\
&\brac{e_ie_je_k\colon i<j<k,\ i,j,k=1,2,\ldots}\\
&\brac{e_ie_je_k\colon i<j<k,\ i,j,k=1,2,\ldots}\\
&\qquad\vdots\\
&\brac{e_{i_1}e_{i_2}\cdots e_{i_n}\colon i_1<i_2\cdots <i_n,\ i_1,i_2,\ldots ,i_n=1,2,\ldots}\\
&\qquad\vdots
\end{align*}
\end{thm}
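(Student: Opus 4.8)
The plan is to build $\gscript(H)$ as a completion of an increasing union of the finite dimensional algebras from Section~2, and then transport all structure one finite dimension at a time. First I would let $\jscript$ denote the collection of all finite subsets of $\brac{1,2,3,\ldots}$, including $\emptyset$ (for which I set $e_\emptyset = 1$), and for $J=\brac{j_1<j_2<\cdots<j_k}\in\jscript$ write $e_{\Jsub}=e_{j_1}e_{j_2}\cdots e_{j_k}$. Since a countable set has only countably many finite subsets, $\jscript$ is countable, so $\brac{e_{\Jsub}\colon J\in\jscript}$ is a countable family. I would then define $\gscript(H)$ to be the Hilbert space with this family as orthonormal basis, i.e.\ the space of all $a=\sum_{J\in\jscript}\alpha_{\Jsub}e_{\Jsub}$ with $\sum_{J}\ab{\alpha_{\Jsub}}^2<\infty$ and inner product $\elbows{a,b}=\sum_{J}\alphabar_{\Jsub}\beta_{\Jsub}$, exactly as in Corollary~\ref{cor21}. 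This space is separable because its orthonormal basis is countable, and (iii) holds by construction.

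Next I would install the algebra structure. For each $n$ let $H_n=\mathrm{span}\brac{e_1,\ldots,e_n}$ and form $\gscript(H_n)$ as in Section~2; the basis $\brac{e_{\Jsub}\colon J\subseteq\brac{1,\ldots,n}}$ of $\gscript(H_n)$ is a subset of the basis of $\gscript(H_{n+1})$, and the resulting inclusion $\gscript(H_n)\hookrightarrow\gscript(H_{n+1})$ is an isometric homomorphism of Hilbert algebras. I would set $\gscript_0=\bigcup_n\gscript(H_n)$, the dense subspace of finite linear combinations of the $e_{\Jsub}$, and define the product of two such combinations by the relations $e_ie_i=1$ and $e_ie_j=-e_je_i$. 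The key point is that any product $e_{\Jsub}e_{\Ksub}$ involves only finitely many indices, so it equals $\pm e_{\Jsub\triangle\Ksub}$ computed inside some $\gscript(H_m)$; consequently associativity, distributivity and homogeneity on $\gscript_0$ are inherited from the finite dimensional algebras where they are already known.

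With the structure in place I would verify (i) and (ii). The closed span of $\brac{e_i}$ is the grade-$1$ subspace $\gscript(H)_1$, which is isometrically identified with $H$ (both carry the same orthonormal basis $\brac{e_i}$), giving (i). For (ii), given $u=\sum c_ie_i\in H$ the computation $uu=\sum_i c_i^2 e_ie_i+\sum_{i<j}c_ic_j(e_ie_j+e_je_i)=\sum_i c_i^2$ is valid because $\sum_i\ab{c_i^2}=\sum_i\ab{c_i}^2<\infty$, while $\elbows{\utilde,u}=\sum_i\overline{\cbar_i}\,c_i=\sum_i c_i^2$ agrees with it, exactly as in the discussion preceding Corollary~\ref{cor21}. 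Note that even for infinite $u$ the product $uu$ is a well-defined scalar, since the off-diagonal terms cancel pairwise.

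Finally, for uniqueness I would argue that any Hilbert geometric algebra satisfying (i)--(iii) must coincide with this one. Property (iii) fixes the orthonormal basis and hence, via the formula of Corollary~\ref{cor21}, the inner product. Property (ii) is equivalent, by the identity $uv+vu=2\elbows{\utilde,v}$ recorded in Section~2, to the relations $e_ie_i=1$ and $e_ie_j=-e_je_i$, which together with associativity determine every product $e_{\Jsub}e_{\Ksub}$ and hence the multiplication on $\gscript_0$. Thus the Hilbert space, its inner product, and its product are all forced, giving uniqueness up to the evident isomorphism. The main obstacle is functional-analytic rather than algebraic: the bilinear product is a signed convolution over the symmetric-difference group on $\jscript$, and convolution of two $\ell^2$ families on an infinite discrete group need not again lie in $\ell^2$, so the product need not extend continuously to all of the completion. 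I would therefore state the algebra structure only on the dense subalgebra $\gscript_0$ and emphasize that every relation to be checked reduces to a finite dimensional identity, which is precisely why the reduction to the $\gscript(H_n)$ is the crux of the argument.
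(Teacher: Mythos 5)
Your construction is essentially the paper's: it too sets $H_n=\mathrm{span}\brac{e_1,\ldots ,e_n}$, forms $\gscript _0(H)=\bigcup _n\gscript (H_n)$ with the inner product inherited from the finite-dimensional algebras of Section~2, completes to obtain the Hilbert space with orthonormal basis (iii), and verifies (i), (ii), separability and uniqueness in the same way you propose. The one genuine divergence is exactly the ``obstacle'' you flag at the end. The paper does not stop at the dense subalgebra: it defines $a\tbullet b=\lim a_nb_n$ for \emph{arbitrary} $a,b$ in the completion, arguing that $c_n=a_nb_n$ is Cauchy because left multiplication $c\mapsto a_nc$ on $\gscript (H_n)$ ``is bounded with norm $\doubleab{a_n}$,'' which yields the uniform estimate $\doubleab{c_n-c_m}\le K\doubleab{b_n-b_m}+M\doubleab{a_n-a_m}$. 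That norm claim is false: the operator norm of left multiplication is a $C^*$-type norm, not the Hilbert-space norm. For instance $p=\tfrac{1}{2}(1+e_1)$ satisfies $pp=p$, so $L_p$ has operator norm at least $1$ while $\doubleab{p}=1/\sqrt{2}$; worse, $\doubleab{L_{a_n}}_{\mathrm{op}}$ can grow without bound as $n\to\infty$ even when $\doubleab{a_n}$ stays bounded, which is precisely your observation that twisted convolution of two $\ell ^2$ families over the symmetric-difference group need not be $\ell ^2$. So the paper's Cauchy argument is unjustified, and your decision to keep the multiplication on the dense subalgebra $\gscript _0$ is the mathematically defensible position, not a cosmetic simplification.

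The trade-off you should make explicit: the theorem as stated, with the paper's Section~2 definition of a Hilbert algebra, asserts a product defined on all of $\gscript (H)$, so your proposal, taken literally, proves a weaker statement in which $\gscript (H)$ is a Hilbert space carrying a geometric-algebra structure only on $\gscript _0$. Since property (ii) quantifies over every $u\in H$ and not just finite combinations, you do need (and correctly supply) the remark that $u\tbullet u=\lim u_nu_n=\lim \sum _{i\le n}c_i^2$ exists for all $u\in H$ even though general products of completion elements do not; the paper makes the same limiting computation. In a final write-up, state plainly that you are correcting, and thereby weakening, the claimed result --- either the product is a densely defined partial operation, or the statement of the theorem must be amended --- rather than presenting your route as an equivalent proof of the same conclusion.
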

\begin{proof}
Let $H_n$ be the $n$-dimensional subspace generated by $e_1,e_2,\ldots ,e_n$, $n=1,2,\ldots\,$. Then the $2^n$-dimensional Hilbert geometric algebra $\gscript (H_n)$ exists \cite{mac17} and $\gscript (H_n)\subseteq\gscript (H_{n+1})$, $n=1,2,\ldots\,$. Let
$\gscript _0(H)=\bigcup\limits _{n=1}^\infty\gscript (H_n)$. For $a,b\in\gscript _0(H)$, we have $a,b\in\gscript (H_n)$ for some $n$ and we define $\elbows{a,b}=\elbows{a,b}_n$ in $\gscript (H_n)$ in which case $\elbows{a,b}$ does not depend on $n$. It is clear that 
$\elbows{\tbullet,\tbullet}$ is an inner product so $\gscript _0(H)$ is an inner product space with orthonormal basis given by the elements listed in (iii). If $\gscript (H)$ is the completion of $\gscript _0(H)$, then $\gscript (H)$ is the smallest Hilbert space containing
$\gscript _0(H)$. It follows that the listed elements in (iii) form an orthonormal basis for $\gscript (H)$. A sequence $a_i\in\gscript (H)$ is
\textit{Cauchy} if for any $\epsilon >0$ there exists an integer $N_\epsilon$ such that $i,j\ge N_\epsilon$ implies
$\doubleab{a_i-a_j}<\epsilon$. We then have that $a\in\gscript (H)$ if and only if there exists a Cauchy sequence $a_i\in\gscript _0(H)$ such that $\lim\limits _{i\to\infty}\doubleab{a_i-a}=0$ so $\lim\limits _{i\to\infty}a_i=a$. To verify (i), letting $a\in H$ we have
$a=\sum\limits _{i=1}^\infty c_ie_i$, $c_i\in\complex$. Then we have $a=\lim a_n=\sum\limits _{i=1}^nc_ie_i$ where
$a_n\in H_n\subseteq\gscript _0(H)$. Hence, $a\in\gscript (H)$ so (i) holds. We now show that $\gscript (H)$ is a geometric algebra. if
$a,b\in\gscript (H)$ then there exist $a_n,b_n\in\gscript _0(H)$ such that $\lim a_n=a,\lim b_n=b$ and we can assume that
$a_n,b_n\in\gscript (H_n)$. Letting $c_n=a_nb_n$ we have that $c_n\in\gscript (H_n)$ and
\begin{align*}
\doubleab{c_n-c_m}&=\doubleab{a_nb_n-a_mb_m}\le\doubleab{a_nb_n-a_nb_m}+\doubleab{a_nb_m-a_mb_m}\\
   &=\doubleab{a_n(b_n-b_m)}+\doubleab{(a_n-a_m)b_m}
\end{align*}
We can consider $c\to a_nc$ as a linear operator on $\gscript (H_n)$. Since $\gscript (H_n)$ is finite dimensional, this operator is bounded with norm $\doubleab{a_n}$. Since $\lim a_n=a$ there exists a $K\in\real ^+$ such that $\doubleab{a_n}\le K$ for every $n$ and similarly $\doubleab{b_m}\le M$ for every $m$, Hence,
\begin{equation*}
\doubleab{c_n-c_m}\le K\doubleab{b_n-b_m}+M\doubleab{a_n-a_m}
\end{equation*}
Therefore, $c_n$ is a Cauchy sequence and we define the product on $\gscript (H)$ by
\begin{equation*}
a\tbullet b=\lim c_n=\lim a_nb_n
\end{equation*}
It follows that if $a,b\in\gscript _0(H)$, then $a,b\in\gscript (H_n)$ for some $n$ and $a\tbullet b=ab$ so the product $a\tbullet b$ extends that on $\gscript _0(H)$. To verify (ii), suppose $u\in H$. Then there exist $u_n\in H_n\subseteq\gscript (H_n)\subseteq\gscript _0(H)$ with
$\lim u_n=u$. Then 
\begin{equation*}
u\tbullet u=\lim u_nu_n=\lim\elbows{\utilde _n,u_n}=\elbows{\utilde ,u}
\end{equation*}
so (ii) holds. To show that $\gscript (H)$ is a geometric algebra, it is clear that $a\tbullet b$ is homogeneous. To show associativity, if
$a,b,c\in\gscript (H)$, there exists $a_n,b_n,c_n\in\gscript _0(H)$ such that $\lim a_n=a$, $\lim b_n=b$ and $\lim c_n=c$.
We then have
\begin{align*}
a\tbullet (b\tbullet c)&=\lim a_n\tbullet (\lim b_n\tbullet\lim c_n)=\lim a_n\tbullet\sqbrac{lim (b_nc_n)}\\
   &=\lim a_nb_nc_n=\lim a_nb_n\tbullet\lim c_n=a\tbullet b\tbullet c_n=(a\tbullet b)\tbullet c
\end{align*}
To show distributivity, we have
\begin{align*}
a\tbullet (b+c)&=\lim a_n\tbullet\sqbrac{\lim (b_n+c_n)}=\lim a_n(b_n+c_n)\\
   &=\lim a_nb_n+\lim a_nc_n=a\tbullet b+a\tbullet c
\end{align*}
It follows that $\gscript (H)$ is a geometric algebra satisfying (i), (ii) and (iii). The uniqueness of $\gscript (H)$ is clear. Finally, assuming the axiom of choice, it follows that a countable union of countable sets is countable. We conclude that the orthonormal basis listed in (iii) is countable so $\gscript (H)$ is separable.
\end{proof}

\end{document}